\DeclareSIUnit{\dBm}{dBm}
\newcolumntype{x}{!{\vrule width 2px}}
\newcolumntype{y}{!{\vrule width 1.5px}}
\newacronym{ai}{AI}{artificial intelligence}
\newacronym{awgn}{AWGN}{additive white Gaussian noise}
\newacronym{bcd}{BCD}{block coordinate descent}
\newacronym{bs}{BS}{base station}
\newacronym{cp}{CP}{control plane}
\newacronym{crc}{CRC}{cyclic redundancy check}
\newacronym{csi}{CSI}{channel state information}
\newacronym{csit}{CSIT}{channel state information at transmitter}
\newacronym{dft-s-ofdm}{DFT-s-OFDM}{Discrete Fourier Transform-spread-OFDM}
\newacronym{fbl}{FBL}{finite blocklength}
\newacronym{gan}{GAN}{generative adversarial network}
\newacronym{ibl}{IBL}{infinite blocklength}
\newacronym{iot}{IoT}{Internet of Things}
\newacronym{lfp}{LFP}{leakage-failure probability}
\newacronym{lp}{LP}{linear programming}
\newacronym{lut}{LUT}{look-up table}
\newacronym{mimo}{MIMO}{multi-input multi-output}
\newacronym{miso}{MISO}{multiple-input single-output}
\newacronym{ml}{ML}{machine learning}
\newacronym{mm}{MM}{Minorize-Maximization}
\newacronym{mmtc}{mMTC}{massive machine-type communication}
\newacronym{noma}{NOMA}{non-orthogonal multi-access}
\newacronym{nom}{NOM}{non-orthogonal multiplexing}
\newacronym{ofdm}{OFDM}{orthogonal frequency-division multiplexing}
\newacronym{ofdma}{OFDMA}{orthogonal frequency-division multiple access}
\newacronym{oma}{OMA}{orthogonal multiple access}
\newacronym{papr}{PAPR}{Peak-to-Average Power Ratio}
\newacronym{pdf}{PDF}{probability density function}
\newacronym{per}{PER}{packet error rate}
\newacronym{phy}{PHY}{physical}
\newacronym{pld}{PLD}{physical layer deception}
\newacronym{pls}{PLS}{physical layer security}
\newacronym{prb}{PRB}{physical resource block}
\newacronym{ran}{RAN}{radio access network}
\newacronym{sic}{SIC}{successive interference cancellation}
\newacronym{simo}{SIMO}{single-input multiple-output}
\newacronym{sinr}{SINR}{signal-to-interference-and-noise ratio}
\newacronym{snr}{SNR}{signal-to-noise ratio}
\newacronym{tdma}{TDMA}{time-division multiple access}
\newacronym{ue}{UE}{user equipment}
\newacronym{up}{UP}{user plane}
\newacronym{urllc}{URLLC}{ultra-reliable low-latency communication}
\newtheorem{theorem}{Theorem}
\newtheorem{lemma}{Lemma}
\newtheorem{corollary}{Corollary}
\newtheorem{proposition}{Proposition}
\newtheorem{definition}{Definition}
\newtheorem{remark}{Remark}
\newcommand{\removelatexerror} {\let\@latex@error\@gobble}
\newcommand{\superscript}[1]{^{\mathrm{#1}}}
\newcommand{\subscript}[1]{_{\mathrm{#1}}}
\newcommand{\revise}[2]{{\color{red}\sout{#1}}{\color{blue}#2}} 
\renewcommand{\revise}[2]{#2} 
\newif\ifreviewmode
  \renewcommand{\todo}[1]{} 
  \renewcommand{\revise}[2]{#2} 
\newcommand\bob{\subscript{Bob}}
\newcommand\eve{\subscript{Eve}}
\begin{document}

\title{Physical Layer Deception as a Stackelberg Game:\\Strategy Regimes, Equilibrium, and Robust Design}

\author{Wenwen~Chen, 
    Bin~Han,~\IEEEmembership{Senior Member,~IEEE,}
    Yao~Zhu,~\IEEEmembership{Member,~IEEE,}
    Anke~Schmeink,~\IEEEmembership{Senior Member,~IEEE,} 
    Giuseppe~Caire,~\IEEEmembership{Fellow,~IEEE,}
    and
    Hans~D.~Schotten,~\IEEEmembership{Member,~IEEE}
    \thanks{W. Chen, B. Han, and H. D. Schotten are with RPTU University Kaiserslautern-Landau, Germany. Y. Zhu is with Wuhan University, China. A. Schmeink is with RWTH Aachen University, Germany. G. Caire is with Technical University of Berlin, Germany. H. D. Schotten is with the German Research Center for Artificial Intelligence (DFKI), Germany. 
    B. Han (bin.han@rptu.de) is the corresponding author.
    Anthropic's Claude 4.6 Opus assisted with language polishing, auxiliary derivation checking, and minor simulation debugging.
    }
}


\maketitle

\begin{abstract}
\Ac{pld} combines \ac{pls} with deception: the transmitter actively misleads the eavesdropper with falsified information. We model the transmitter--eavesdropper interaction as a Stackelberg game in which the transmitter commits to a resource allocation and encryption strategy, and each receiver best-responds by selecting among three decryption modes: Perception, Dropping, and Exclusion. Using semantic distortion as the metric, we derive closed-form switching surfaces that partition the parameter space into strategy regimes and identify conditions under which each regime dominates. The robust operating point, at the peak of the worst-case distortion envelope, is shown to be a Stackelberg equilibrium; iterative best-response dynamics oscillate around it with strictly lower time-averaged security. We evaluate the design under Nakagami-$m$ fading with static and adaptive transmitter strategies, benchmarked against a classical \ac{pls} baseline. Numerical results validate the regime characterization and show $12$--$55\%$ higher eavesdropper distortion than the erasure-only baseline across all fading conditions.
\end{abstract}

\glsresetall

\begin{IEEEkeywords}
Physical layer deception, Stackelberg game, semantic distortion, finite blocklength codes, Nakagami fading.
\end{IEEEkeywords}

\IEEEpeerreviewmaketitle

\section{Introduction}
\Ac{pls} exploits the randomness of wireless channels to provide information-theoretic confidentiality, offering security guarantees that hold irrespective of the adversary's computational capability \cite{hamamreh2018classifications}. Beyond passive eavesdropping, \ac{pls} can counter impersonation and message falsification \cite{Liu2017Physical}, and can incorporate semantic information to strengthen security \cite{mitev2023physical}.

Yet conventional \ac{pls} remains fundamentally passive: eavesdroppers can intercept at will with little risk of detection \cite{chaman2018ghostbuster}, while legitimate users bear the full cost of protection \cite{zhou2011throughput,wang2018survey}. Deception technologies address this asymmetry by feeding false data to the attacker, forcing the eavesdropper to act on misleading information or revealing its presence \cite{wang2018cyber}.

We introduced the \ac{pld} framework in \cite{han2023non} and extended it to nonorthogonal multiplexing \cite{chen2025physical} and \ac{ofdm} \cite{chen2025physicalOFDM}. In all three designs, the transmitter encrypts a fraction of its messages so that an eavesdropper who fails to decode the key obtains falsified---but plausible---information. These studies optimized only the transmitter's strategy, treating the receiver as a fixed decoder. In \cite{han2025semantic}, we adopted semantic distortion as the metric and identified three receiver decryption strategies (Perception, Dropping, and Exclusion), but the transmitter's joint optimization of resource allocation and ciphering rate under adaptive receivers remained open.

In this work, we optimize the transmitter’s resource allocation and ciphering rate by accounting for the decryption strategies of both the eavesdropper and the legitimate receiver. Our contributions are as follows:

\begin{enumerate}
    \item We model the transmitter–receiver interaction as a Stackelberg game and derive closed-form strategy switching surfaces that partition the parameter space into regimes where each decryption strategy dominates. We establish an exclusion dominance condition and characterize the sub-regime structure through a triple-point analysis.
    \item We prove that the robust operating point, at the peak of the worst-case distortion envelope, is a Stackelberg equilibrium. The transmitter achieves a provable security guarantee without needing to predict the eavesdropper’s exact strategy. We further show that iterative best-response dynamics oscillate with period two and yield strictly lower security than the equilibrium.
    \item We derive closed-form solutions for the optimal ciphering rate and resource allocation across all nine strategy-pair cases, and propose an algorithm combining the equilibrium with resource optimization.
    \item We evaluate the robust design under Nakagami-$m$ fading with static and channel-adaptive transmitter strategies, and benchmark against a classical \ac{pls} baseline that relies solely on packet erasures. Numerical results confirm the regime boundaries, quantify the security gain over naive iteration, and demonstrate $12$--$55\%$ higher eavesdropper distortion compared to the erasure-only baseline.
\end{enumerate}

The remainder of this paper is organized as follows. Sec.~\ref{sec:related_works} reviews related work. Sec.~\ref{sec:problem_setup} introduces the \ac{pld} framework, decryption strategies, and the semantic distortion metric. Sec.~\ref{sec:strategy_analysis} analyzes the receiver strategy regimes, derives the switching surfaces, and establishes the Stackelberg equilibrium at the robust operating point. Sec.~\ref{sec:strategy_optimization} formulates the resource allocation problem and presents the optimization algorithm. Sec.~\ref{sec:numerical} provides numerical results including a fading channel evaluation with baseline comparison, and Sec.~\ref{sec:conclusion} concludes the paper.

\section{Related Works}
\label{sec:related_works}
\emph{Shannon} \cite{shannon1949communication} showed that perfect secrecy requires shared keys at least as long as the message. \emph{Wyner}’s wiretap channel \cite{Wyner1975wiretap} circumvented this requirement: when the eavesdropping channel is degraded relative to the legitimate channel, positive secrecy capacity exists. \emph{Csiszár} and \emph{Körner} \cite{csiszar2003broadcast} generalized this to non-degraded broadcast channels.
Secrecy capacity has since been characterized for Gaussian \cite{Leung1978Gaussian}, fading \cite{barros2006secrecy}, \ac{mimo} \cite{hero2004secure}, \ac{simo} \cite{parada2005secrecy}, \ac{miso} \cite{li2007secret}, broadcast \cite{liang2008secure}, multiple-access \cite{liu2006discrete}, interference \cite{liang2009capacity}, and relay \cite{oohama2007capacity} channels, under full \cite{bloch2008wireless}, imperfect \cite{li2013spatially}, or absent \cite{khisti2010secure} eavesdropper \ac{csi}. Practical \ac{pls} techniques (power allocation, precoding, antenna selection, artificial noise) are surveyed in \cite{Wang2019survey}.

For the \ac{fbl} regime relevant to short-packet communications, \emph{Polyanskiy} et al.\ \cite{polyanskiy2010channel} derived tight bounds on the decoding error probability, now standard for \ac{urllc} analysis \cite{liu2023energy,li2023robust,soleymani2025optimization}. \emph{Yang} et al.\ \cite{yang2019wiretap} tightened the second-order coding rate bounds for wiretap channels. On the \ac{pls} side, \cite{wang2022achieving} jointly optimized precoding and artificial noise under a covertness constraint, \cite{Oh2023joint} studied the reliability--security trade-off via resource allocation, and \cite{zhu2023trade} introduced leakage-failure probability as a joint reliability--security metric for short packets.

Deception as a defensive tool dates back to \emph{Cheswick}’s \cite{cheswick1992evening} and \emph{Stoll}’s \cite{stoll2024cuckoo} honeypot experiments, and has since been deployed across network, system, and application layers; see \cite{fraunholz2018demystifying,HKB2018deception,PCZ2019game} for surveys. At the physical layer, deception remains less explored. \emph{He} et al.\ \cite{He2023Proactive} used multi-antenna transmission to feed fake but plausible messages to the eavesdropper, \emph{Qi} et al.\ \cite{qi2024adversarial} designed \ac{gan}-based adversarial waveforms that set defense traps, and \emph{Ye} et al.\ \cite{Ye2025Anti} embedded camouflage features into transmitted signals to mislead eavesdropper alignment.

Game theory has a long history in \ac{pls}. Stackelberg formulations, where the transmitter commits first and the adversary best-responds, have been used to optimize power allocation \cite{li2011stackelberg}, beamforming \cite{zhu2016secure}, and friendly jamming \cite{garnaev2016friendly}. Zero-sum and Bayesian games address simultaneous moves or incomplete \ac{csi} \cite{altman2009survey}. These works share a common structure: the adversary’s action space is continuous (power level, beamforming vector), and the equilibrium is found via standard convex optimization or water-filling. In contrast, the \ac{pld} setting yields a \emph{discrete} adversary action space (three decryption strategies) that partitions the parameter space into distinct regimes with sharp transitions, requiring a qualitatively different analysis.

\section{Problem Setup}
\label{sec:problem_setup}

\subsection{System Model}
\label{sec:system_model}
The \ac{pld} framework is shown in Fig.~\ref{fig:PLD_transmitter}. The transmitter \emph{Alice} sends encrypted messages to the legitimate user \emph{Bob} over the legitimate channel $h\bob$, while an eavesdropper \emph{Eve} listens to \emph{Alice} over the eavesdropping channel $h\eve$. With proper beamforming, \emph{Alice} ensures that $h_{Bob}$ remains statistically better than $h_{Eve}$, enabling \ac{pls}.

\begin{figure}[!htpb]
    \centering
    \includegraphics[width=\linewidth]{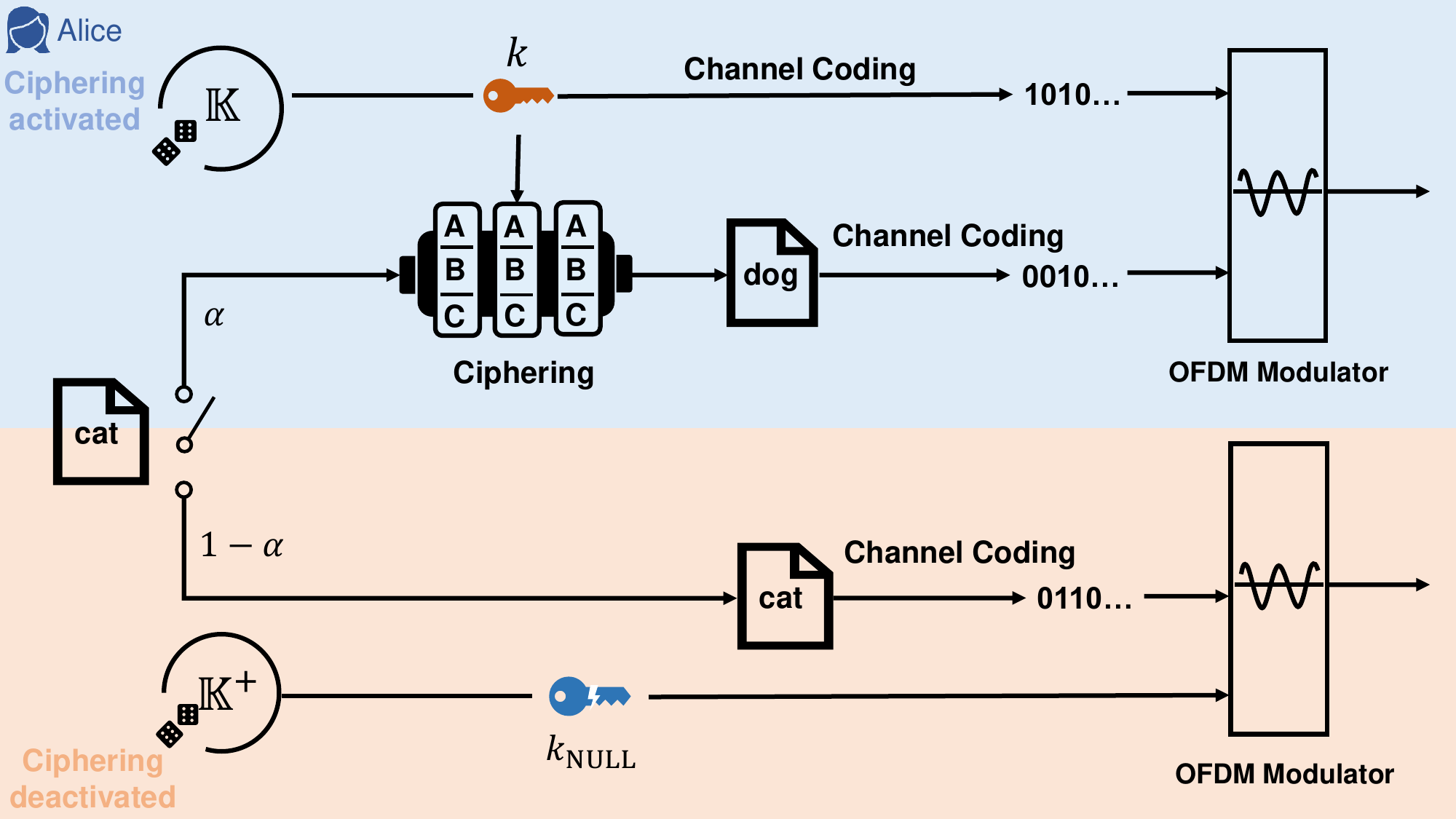}
    \caption{System model of \ac{pld} at the transmitter side}
    \label{fig:PLD_transmitter}
\end{figure}

At the transmitter, \emph{Alice} activates deceptive ciphering with probability $\alpha$. With a random key $k \in \mathbb{K}$, the plaintext $p \in \mathbb{S}$ is encrypted into ciphertext $m \in \mathbb{S}$.  We define the encryption process as:
\begin{equation}
    m=f_k(p) \in \mathbb{S}, \quad \forall (p,k)\in (\mathbb{S}\times\mathbb{K}).
\end{equation}

Conversely, given $k$, the plaintext is recovered as:
\begin{equation}
    p=f^{-1}_k(m).
\end{equation}

When the cipherer is deactivated, \emph{Alice} transmits the plaintext without encryption, i.e., $m=p$. To prevent \emph{Eve} from inferring the cipherer status, \emph{Alice} simultaneously transmits a \emph{dummy sequence} (hereafter called a \emph{litter sequence}) $k\subscript{NULL}$ that occupies the key channel, making the transmission indistinguishable from the encrypted case. 

{The codebook $\mathbb{S}$ must be designed such that both the plaintext and the ciphertext are valid elements within it. If \emph{Eve} fails to decode the key, she interprets the ciphertext as the original message. Since both are valid codebook entries with plausible semantic meaning, \emph{Eve} has no reason to question authenticity. As a concrete example, for $\mathcal{S}=2$ the codebook contains two messages and a single key bit flips between them; for larger $\mathcal{S}$, any permutation cipher on $\mathbb{S}$ indexed by $k$ satisfies the requirement.} 

We consider a worst-case scenario in which the eavesdropper has full knowledge of the \ac{pld} mechanisms, and the modulation and channel coding schemes are publicly known and shared among \emph{Alice}, \emph{Bob}, and \emph{Eve}. This models threats such as a breached database or malicious insider.

At the receiver side, \revise{the receiver}{\emph{Bob} and \emph{Eve}} decode both the ciphertext and the key simultaneously, as illustrated in Fig.~\ref{fig:PLD_receiver}. If ciphertext decoding fails, the receiver discards the message. Otherwise, the receiver’s action depends on the key decoding outcome. When the key is also correctly decoded, the corresponding plaintext $\hat{p}$ can be recovered from $\hat{m}$ and $\hat{k}$. Otherwise, if the key decoding fails, the receiver may adopt one of three possible decryption strategies: 
\begin{enumerate}
    \item \emph{Perception}: The receiver assumes no valid key was transmitted and interprets the ciphertext as plaintext, i.e., $\hat{m}=\hat{p}$. This is the default for deterministic decryption.
    \item \emph{Dropping}: The receiver cannot determine whether the failure of decoding is due to poor channel quality preventing the decoding of a valid key, or because \emph{Alice} transmitted $k\subscript{NULL}$. Thus, the receiver simply drops the received message $\hat{m}$, treating it as $\hat{p}=s\subscript{NULL}$.
    \item \emph{Exclusion}: The receiver assumes that a valid key was transmitted but not correctly decoded, which means that the ciphertext is not equal to the plaintext. Thus, the receiver should select another codeword $\hat{p}\neq \hat{m}$ for $\hat{p} \in \mathbb{S}$. We assume that $\Pr(p)=\frac{1}{\mathcal{S}}$ for all $p \in \mathbb{S}$, where $\mathcal{S}$ is the codebook cardinality. In this case, the optimal strategy to select $\hat{p}$ shall follow the principle of maximum likelihood:
    \begin{equation}
        f_{\hat{k}}^{-1}(\hat{m})=\arg \underset{p\in \mathbb{S}\backslash\{\hat{m}\}}{\max} \Pr(p).
    \end{equation}
\end{enumerate}

\begin{figure}[!htpb]
    \centering
    \includegraphics[width=\linewidth]{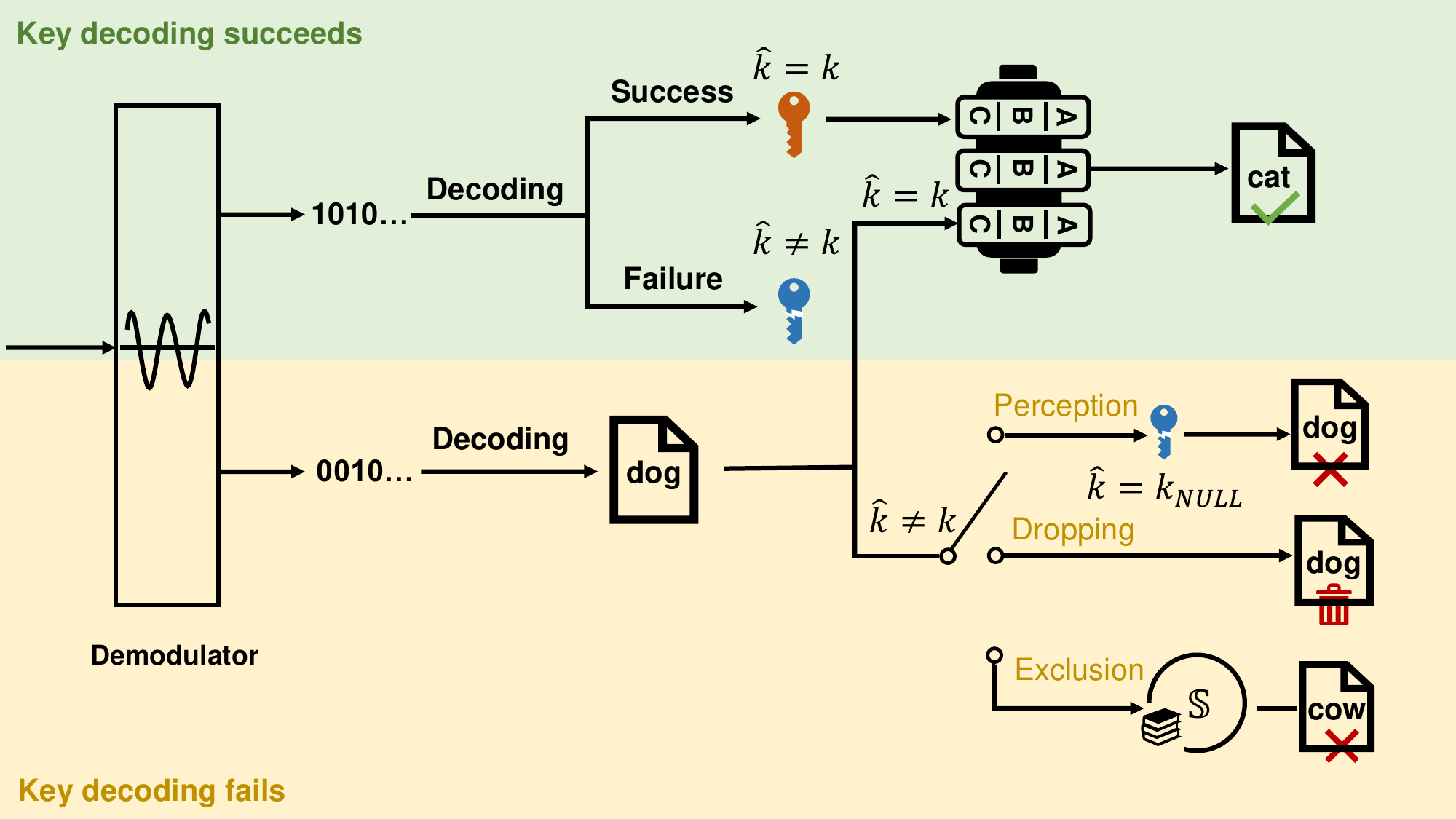}
    \caption{Decryptor model at the receiver side}
    \label{fig:PLD_receiver}
\end{figure}

{The three strategies apply to both \emph{Bob} and \emph{Eve}.} Due to the superior conditions of the legitimate channel, \emph{Bob} is highly likely to correctly decode both the ciphertext and the key, thereby recovering the intended plaintext. In contrast, \emph{Eve} is less likely to decode the key successfully and is consequently more inclined to employ one of the three decryption strategies. With adequate redundancy and power, the likelihood of mistaking one valid codeword for another is negligible compared to the probability of a decoding failure. Therefore, in the following analysis, we only consider the case of \emph{Erasure} for decoding error, where bit errors exceed the correction capability but remain within the detection capability.

Under Perception or Exclusion without a recovered key, the receiver obtains incorrect information. Therefore, we can design codebooks so that acting on the falsely decrypted information incurs a semantic penalty on \emph{Eve}, quantified by the semantic distortion metric defined below.

\subsection{Performance Metrics}
The \ac{pld} model admits a semantic communication interpretation. The plaintext $p\in\mathbb{S}$ represents a meaning and the ciphertext $m \in \mathbb{S}$ a message. \emph{Alice} encodes meaning $p\in \mathbb{S}$ into message $m \in \mathbb{S}$ with key $k \in \mathbb{K}$. The encryptor is thus a semantic encoder~\cite{shao2024theory}, denoted $u_k$ (Fig.~\ref{fig:semantic_model_pld}).

When the deceptive ciphering is deactivated, \emph{Alice} sends the unciphered plaintext $p \in \mathbb{S}$ together with a litter sequence $k\subscript{NULL}$. We define $\mathbb{K}^+=\mathbb{K}\cup\{k\subscript{NULL}\}$ so that every message is encoded by $u_k$ with $k\in\mathbb{K}^+$. In particular, we have $u_{k\subscript{NULL}}(p)=p$ for all $p\in\mathbb{S}$.

At the receiver, the semantic decoder $v_{\hat{k}}$ uses the received key $\hat{k}\in \mathbb{K}^+$ to decode the received ciphertext $\hat{m} \in
\mathbb{S}^+$ into the meaning $\hat{p}$, where $\mathbb{S}^+=\mathbb{S}\cup\{s\subscript{NULL}\}$ and $s\subscript{NULL}$ denotes the decoding error flag.

\begin{figure}[!htpb]
    \centering
    \includegraphics[width=\linewidth]{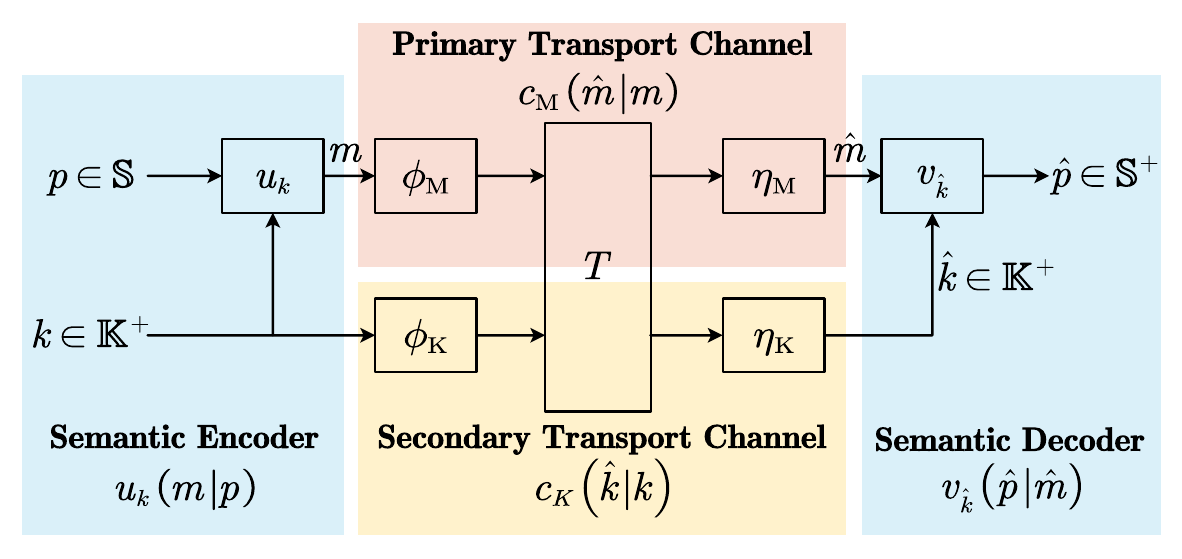}
    \caption{Dual-channel model of \ac{pld}}
    \label{fig:semantic_model_pld}
\end{figure}

We define the ciphertext channel as the \emph{primary transport channel} and the key channel as the \emph{secondary transport channel}. In the primary channel, $m$ passes through encoder $\Phi\subscript{M}$, physical channel $T$, and decoder $\eta\subscript{M}$ to yield $\hat{m}$.
 As noted above, \emph{confusion}—decoding one codeword as another—is negligible with appropriate coding. Therefore, the \emph{primary transport channel} can be modeled as an erasure channel, as illustrated in Fig. \ref{fig:primary_transport_channel}.

\begin{figure}
    \centering
    \includegraphics[width=0.8\linewidth,clip,trim=0 10 0 10]{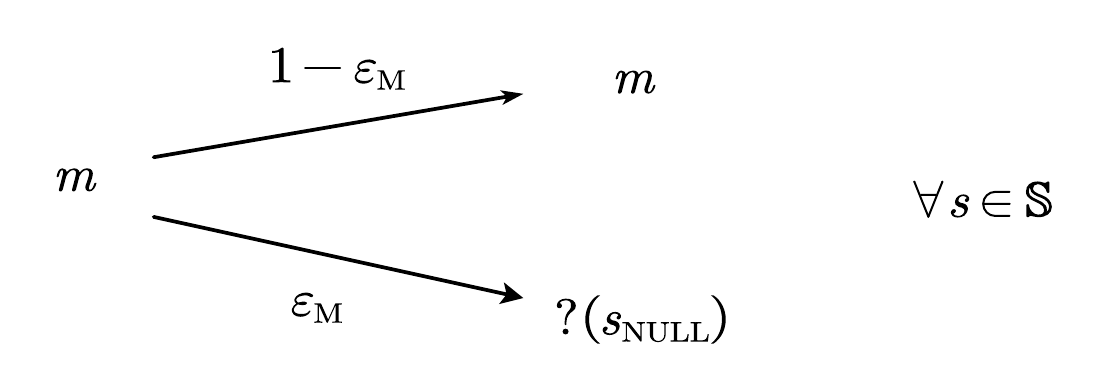}
    \caption{Model of the primary transport channel}
    \label{fig:primary_transport_channel}
\end{figure}

The conditional \ac{pdf} of the primary transport channel is:
\begin{equation}
    \begin{aligned}
 c\subscript{M}(\hat{m}|m) &=   \begin{cases}
 1-\varepsilon\subscript{M},    & \hat{m}=m \\
 \varepsilon\subscript{M}, & \hat{m}=s\subscript{NULL} \\
 0, & \text{otherwise}
\end{cases} \\
&= (1-\varepsilon\subscript{M})\delta(\hat{m}-m)+\varepsilon\subscript{M}\delta(\hat{m}-s\subscript{NULL}).
   \end{aligned}
\end{equation}

Since litter sequences are random, the secondary transport channel has a Z-shape (Fig.~\ref{fig:secondary_transport_channel}). The conditional \ac{pdf} of the \emph{secondary transport channel} can be formulated as:
\begin{equation}
    \begin{aligned}
    c\subscript{K}(\hat{k}|k)=&
        \begin{cases}
            1-\varepsilon\subscript{K}, & \hat{k}=k, k \in \mathbb{K} \\
            \varepsilon\subscript{K}, & \hat{k}=k\subscript{NULL}\neq k, k \in \mathbb{K} \\
            1, &\hat{k}=k=k\subscript{NULL} \\
            0, & \text{otherwise}
        \end{cases} \\
        =&\left[(1-\varepsilon\subscript{K})\delta(\hat{k}-k)+\varepsilon\subscript{K}\delta(\hat{k}-k\subscript{NULL})\right]\cdot\\
        &[1-\delta(k-k\subscript{NULL})]+\delta(\hat{k}-k\subscript{NULL}).
    \end{aligned}
\end{equation}

\begin{figure}
    \centering
    \includegraphics[width=0.8\linewidth,clip,trim=0 10 0 10]{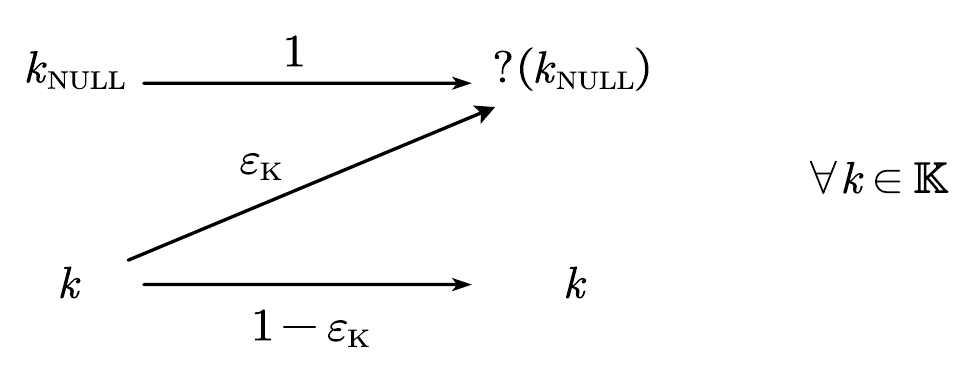}
    \caption{Model of the secondary transport channel}
    \label{fig:secondary_transport_channel}
\end{figure}

We define the semantic distortion between meanings $p_1, p_2 \in \mathbb{S}^+$ as $d(p_1,p_2)$. Thus, $d(p,p)\equiv 0, \forall p\in \mathbb{S}^+$. 

From the two transport channels and the semantic codec, the end-to-end conditional \ac{pdf} is:
\begin{equation}
    \psi(\hat{p}|p)=\sum_{p,\hat{p},k,\hat{k}}\Pr(k)u_k(m|p)c\subscript{M}(\hat{m}|m)c\subscript{K}(\hat{k}|k)v_{\hat{k}}(\hat{p}|\hat{m}),
\end{equation}
and the average semantic distortion is:
\begin{equation}
    D=\sum_{p,\hat{p}}\Pr(p)\psi(\hat{p}|p)d(p,\hat{p}).
    \label{eq:distortion_with_key}
\end{equation}

Given $k$, the encoder and decoder are deterministic: $m = f_k(p)$, $\hat{p} = f_k^{-1}(\hat{m})$. Their conditional distributions are:
\begin{equation}
 u_k(m|p) =
    \begin{cases}
       1, & m=f_k(p), \\
       0, & \text{otherwise},
    \end{cases}
\end{equation}
\begin{equation}
 v_{\hat{k}}(\hat{p}|\hat{m}) =
    \begin{cases}
       1, & \hat{p}=f^{-1}_{\hat{k}}(\hat{m}), \\
       0, & \text{otherwise},
    \end{cases}
\end{equation}
Introducing $\alpha\in[0,1]$ as the encryption probability and accounting for erasures, \eqref{eq:distortion_with_key} becomes:
\begin{equation}
\begin{split}
    D&=\varepsilon_{\mathrm{M}}d(p,s\subscript{NULL})\\
    &+(1-\varepsilon_{\mathrm{M}})\varepsilon\subscript{K}\sum_p \Pr(p)\sum_{k \in \mathbb{K}}\Pr(k)d(p,f_k(p)).
    \label{eq:Distortion_deterministic}
\end{split}
\end{equation}

\revise{The distortion $d(p,s\subscript{NULL})$ and $d(p,f_k(p))$ can be determined from the codebook of plaintext $p$ and ciphertext $m$, and thus can be considered as constant.}{Once the codebook is fixed, $d(p,s\subscript{NULL})$ and $d(p,f_k(p))$ are constants. Defining $D\subscript{loss}\triangleq d(p,s\subscript{NULL})$ (loss distortion) and $D\subscript{conf}\triangleq d(p,f_k(p))$ (confusion distortion), \eqref{eq:Distortion_deterministic} simplifies to:}
\begin{equation}
\label{eq:D_i_resource_allocation}
D_i=\varepsilon_{i,\mathrm{M}}D\subscript{loss}+\alpha (1-\varepsilon_{i,\mathrm{M}})\varepsilon_{i,\mathrm{K}}D\subscript{conf},\quad i\in\{\text{Bob},\text{Eve}\}.
\end{equation}

\subsection{Decryptor Model}
Deception occurs when the receiver fails to decode the key, contributing $D\subscript{conf}$. When $\hat{k}=k\subscript{NULL}$, the receiver chooses among Perception, Dropping, and Exclusion (Fig.~\ref{fig:PLD_receiver}). We assume $\Pr(p)=\frac{1}{\mathcal{S}}$, where $\mathcal{S}$ denotes the codebook cardinality, with selection probabilities $(\beta_1, \beta_2, \beta_3)$. The stochastic semantic decoder is:
\begin{equation}
\Tilde{v}_{\hat{k}}(\hat{p}|\hat{m})=
    \begin{cases}
        1, & \hat{p}=f_{\hat{k}}^{-1}(\hat{m}),\hat{k}\neq k\subscript{NULL} \\
        \beta_1, & \hat{p}=\hat{m},\hat{k}=k\subscript{NULL} \\
        \beta_2, & \hat{p}=m\subscript{NULL},\hat{k}=k\subscript{NULL} \\
        \frac{\beta_3}{\mathcal{S}-1}, & \hat{p}\in \mathbb{S}\backslash \{\hat{m}\},\hat{k}=k\subscript{NULL} \\
        0, &\text{otherwise}
    \end{cases}.
\end{equation}

The distortion under the opportunistic decryptor is:
\begin{equation}
\label{eq:Tilde_D_i}
    \Tilde{D}_i = \varepsilon_{i,\mathrm{M}} D\subscript{loss} + (1-\varepsilon_{i,\mathrm{M}}) \sum_{k=1}^{3} \beta_k \Delta_k,
\end{equation}
where the three branch distortions are
\begin{equation}
\label{eq:Delta_k}
\begin{aligned}
    \Delta_1 &= \varepsilon_{i,\mathrm{K}} \alpha D\subscript{conf}, \\
    \Delta_2 &= \left[\varepsilon_{i,\mathrm{K}} \alpha + (1-\alpha)\right] D\subscript{loss}, \\
    \Delta_3 &= \left[\varepsilon_{i,\mathrm{K}} \frac{\alpha(\mathcal{S}-2)}{\mathcal{S}-1} + (1-\alpha)\right] D\subscript{conf}.
\end{aligned}
\end{equation}

Given channel conditions and resource allocation, $\varepsilon_{i,\mathrm{M}}$ and $\varepsilon_{i,\mathrm{K}}$ are determined for $i \in \{\mathrm{Bob, Eve}\}$. Each receiver selects $(\beta_1^i,\beta_2^i,\beta_3^i)$ to minimize $\Tilde{D}_i$:
 \begin{mini!}
    	{\beta_1^i,\beta_2^i,\beta_3^i}{\Tilde{D}_{i}=\Tilde{D}_{i}(\varepsilon_{i,\mathrm{M}},\varepsilon_{i,\mathrm{K}}) \label{obj:Tilde_D_i}} {\label{Problem_D_i}}{}
        \addConstraint{\beta_1^i+\beta_2^i+\beta_3^i=1}.
    \end{mini!}
    
    \revise{This optimization problem can be simply solved by comparing the distortions of the three options. Thus, the optimal $\beta_k$ must be associated with the minimal distortion.}{ This linear program is solved by setting $\beta_k=0$ for all non-minimal $\Delta_k$ and distributing the remaining weight among minimal ones.} Thus, the optimum should satisfy:
 \begin{equation} 
   \begin{aligned}
        \beta_k&=0, \quad \forall \Delta_k\neq\Delta_{\min} \\
        \sum_{k:\Delta_k=\Delta_{\min}}\beta_k&=1, 
    \end{aligned}
\end{equation}
where $\Delta_{\min}=\min\{\Delta_1,\Delta_2,\Delta_3\}$.

\emph{Remark on the uniform prior.} The assumption $\Pr(p) = 1/\mathcal{S}$ is adopted for analytical tractability. Natural semantic sources are typically skewed (e.g., Zipf-distributed), in which case the maximum-likelihood decoder biases its guess toward high-probability messages and $D\subscript{conf}$ becomes a source-weighted average over the confusion set. The analysis structure is preserved under any $\Pr(p)$: each $\Delta_k(\alpha)$ stays linear in $\alpha$, so the concave piecewise-linear lower envelope and the Stackelberg equilibrium at its peak persist. Only the numerical values of the switching boundaries shift, and the closed-form expressions for $\alpha^{\dagger}$ would need to be re-derived accordingly. A full treatment under non-uniform priors is left for future work.

\section{Receiver Strategy Regime Analysis}
\label{sec:strategy_analysis}

The three decryption strategies from Sec.~\ref{sec:problem_setup} are each optimal under different conditions. Since Alice designs the transmission parameters $(n\subscript{M}, n\subscript{K}, \alpha)$ before the receivers act, she must account for worst-case Eve behavior. We derive closed-form switching surfaces partitioning the parameter space into strategy regimes, then establish Alice's robust operating point and its game-theoretic optimality.

\subsection{Strategy Switching Surfaces}
\label{subsec:switching_surfaces}

Each receiver selects the strategy minimizing $\Delta_k$ in \eqref{eq:Tilde_D_i}. The \emph{strategy switching surfaces} are the loci where two strategies yield equal distortion.

For notational convenience, we introduce the distortion ratio $\rho \triangleq D\subscript{loss} / D\subscript{conf} \in (0,1)$.

\begin{lemma}[Strategy Switching Surfaces]
\label{lem:switching_surfaces}
The switching surfaces $\mathcal{B}_{jk}$ are defined by $\Delta_j = \Delta_k$ for $j,k \in \{1,2,3\}$, $j \neq k$:
\begin{align}
    \mathcal{B}_{12}&: \quad \alpha_{12}^* = \frac{\rho}{\varepsilon\subscript{K}(1-\rho) + \rho}, \label{eq:B12} \\
    \mathcal{B}_{13}&: \quad \alpha_{13}^* = \frac{\mathcal{S}-1}{\varepsilon\subscript{K} + \mathcal{S}-1}, \label{eq:B13} \\
    \mathcal{B}_{23}&: \quad \alpha_{23}^* = \frac{1-\rho}{\varepsilon\subscript{K}\left(\rho - \frac{\mathcal{S}-2}{\mathcal{S}-1}\right) + 1-\rho}, \label{eq:B23}
\end{align}
where $\mathcal{B}_{23}$ exists only when $\rho > \frac{\mathcal{S}-2}{\mathcal{S}-1}$.
\end{lemma}

\noindent\emph{Proof:} See Appendix~\ref{app:proof_switching_surfaces}.\footnote{Due to space constraints, all proofs are provided in the accompanying appendix document.}

All three thresholds decrease in $\varepsilon\subscript{K}$: a worse key channel makes Perception less attractive, so the receiver switches away at a lower $\alpha$. Since $\varepsilon\subscript{Bob,K} < \varepsilon\subscript{Eve,K}$ under the legitimate channel advantage, we have:
\begin{equation}
\label{eq:threshold_ordering}
    \alpha_{jk}^*(\varepsilon\subscript{Eve,K}) < \alpha_{jk}^*(\varepsilon\subscript{Bob,K}), \quad \forall j,k.
\end{equation}
That is, \emph{Eve switches strategy at a lower $\alpha$ than Bob.}

\subsection{Exclusion Dominance Condition}
\label{subsec:exclusion_dominance}

In~\cite{han2025semantic}, we observed that the Exclusion strategy vanishes for large codebooks but did not quantify the condition. The following theorem provides an exact threshold.

\begin{theorem}[Exclusion Dominance]
\label{thm:exclusion_dominance}
The Exclusion strategy ($\beta_3 = 1$) yields strictly higher distortion than Dropping ($\beta_2 = 1$) for all $(\alpha, \varepsilon\subscript{K}) \in (0,1)^2$ if and only if
\begin{equation}
\label{eq:exclusion_dominance}
    \mathcal{S} > \mathcal{S}^*(\rho) \triangleq \frac{2-\rho}{1-\rho}.
\end{equation}
\end{theorem}

\noindent\emph{Proof:} See Appendix~\ref{app:proof_exclusion_dominance}.

Theorem~\ref{thm:exclusion_dominance} partitions the $(\mathcal{S}, \rho)$ parameter space into two fundamental regimes:
\begin{itemize}
    \item \textbf{Region~A} ($\mathcal{S} > \mathcal{S}^*(\rho)$): Only Perception and Dropping are relevant; the analysis reduces to two strategies.
    \item \textbf{Region~B} ($\mathcal{S} \leq \mathcal{S}^*(\rho)$): All three strategies can be optimal, depending on $\alpha$ and $\varepsilon\subscript{K}$.
\end{itemize}

Table~\ref{tab:critical_S} lists $\mathcal{S}^*(\rho)$; for $\rho = 0.1$, $\mathcal{S}^*= 2.11$, so $\mathcal{S} \geq 3$ falls in Region~A while $\mathcal{S} = 2$ lies in Region~B (Fig.~\ref{fig:phase_diagram}).

\begin{table}[!htpb]
    \centering
    \caption{Critical codebook size $\mathcal{S}^*(\rho)$ for different $\rho$}
    \label{tab:critical_S}
    \begin{tabular}{c|cccccccc}
    \toprule[2px]
    $\rho$ & 0.1 & 0.2 & 0.3 & 0.5 & 0.7 & 0.9 & 0.95 \\
    \midrule[1px]
    $\mathcal{S}^*$ & 2.11 & 2.25 & 2.43 & 3 & 4.33 & 11 & 21 \\
    \bottomrule[2px]
    \end{tabular}
\end{table}

\begin{figure}[!htpb]
    \centering
    \includegraphics[width=.9\linewidth]{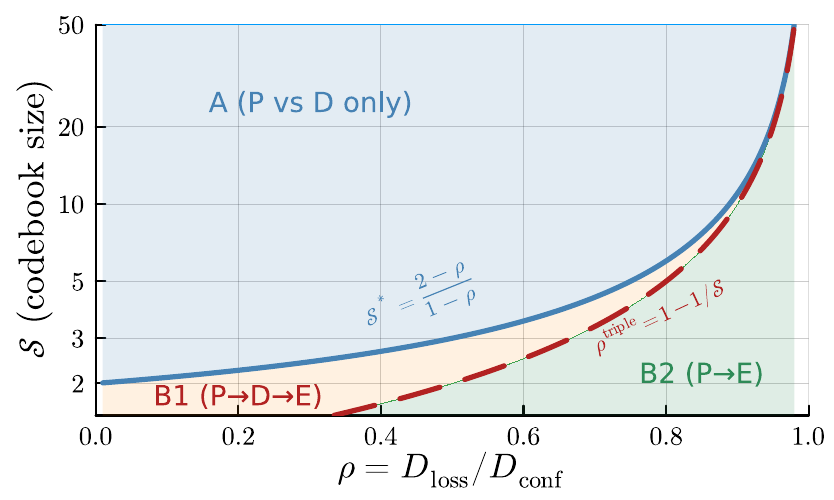}
    \caption{Strategy regime phase diagram in the $(\mathcal{S}, \rho)$ plane.}
    \label{fig:phase_diagram}
\end{figure}

\subsection{Strategy Regime Characterization within Region~B}
\label{subsec:region_B}

Within Region~B, all three strategies can be optimal; dominance depends on the threshold ordering.

\begin{lemma}[Triple Point and Sub-regime Classification]
\label{lem:triple_point}
In Region~B, the three switching surfaces $\mathcal{B}_{12}$, $\mathcal{B}_{13}$, $\mathcal{B}_{23}$ coincide at the \emph{triple point}
\begin{equation}
\label{eq:triple_point}
    \rho\superscript{triple} = 1 - \frac{1}{\mathcal{S}}.
\end{equation}
The sub-regime classification is:
\begin{itemize}
    \item Sub-regime~B1 ($\rho < \rho\superscript{triple}$): $\alpha_{12}^* < \alpha_{13}^* < \alpha_{23}^*$. As $\alpha$ increases, the optimal strategy transitions $P \to D \to E$.
    \item Sub-regime~B2 ($\rho > \rho\superscript{triple}$): $\alpha_{23}^* < \alpha_{13}^* < \alpha_{12}^*$. The Dropping region is empty; only $P \to E$ transition occurs.
\end{itemize}
\end{lemma}

\noindent\emph{Proof:} See Appendix~\ref{app:proof_triple_point}.

The two codebook-size thresholds in Theorem~\ref{thm:exclusion_dominance} and Lemma~\ref{lem:triple_point} describe different geometric events. $\mathcal{S}^*(\rho)$ is a worst-case bound obtained by maximizing the dominance inequality over the full domain $(\alpha, \varepsilon\subscript{K}) \in (0,1)^2$, and gives the smallest codebook size for which Exclusion is nowhere optimal. The triple point $\rho\superscript{triple} = 1 - 1/\mathcal{S}$, by contrast, identifies a single location in the parameter space where the three switching surfaces meet. The triple point always lies strictly inside Region~B (below $\mathcal{S}^*$): crossing it re-orders the sub-regime structure from B1 to B2 but does not eliminate Exclusion, whereas crossing $\mathcal{S}^*$ removes Exclusion from the entire parameter space.

\subsection{Viable Strategy Pairs under Channel Advantage}
\label{subsec:viable_pairs}

Since $\varepsilon\subscript{Bob,K} < \varepsilon\subscript{Eve,K}$ and the switching thresholds are monotone in $\varepsilon\subscript{K}$, not all (Bob, Eve) strategy pairs can co-occur at a given $\alpha$.

\begin{lemma}[Impossible Strategy Pairs]
\label{lem:impossible_pairs}
Under the legitimate channel advantage $\varepsilon\subscript{Bob,K} < \varepsilon\subscript{Eve,K}$, any strategy pair where Bob has switched to a more aggressive strategy than Eve is impossible. Specifically:
\begin{enumerate}[label=(\roman*)]
    \item $(Bob\!=\!D, Eve\!=\!P)$ is impossible in all regimes, since $\alpha_{12}^*(\varepsilon\subscript{Eve,K}) < \alpha_{12}^*(\varepsilon\subscript{Bob,K})$.
    \item $(Bob\!=\!E, Eve\!=\!P)$ is impossible in Region~B, since $\alpha_{13}^*(\varepsilon\subscript{Eve,K}) < \alpha_{13}^*(\varepsilon\subscript{Bob,K})$.
    \item $(Bob\!=\!E, Eve\!=\!D)$ is impossible in Region~B, since $\alpha_{23}^*(\varepsilon\subscript{Eve,K}) < \alpha_{23}^*(\varepsilon\subscript{Bob,K})$.
\end{enumerate}
\end{lemma}

\noindent\emph{Proof:} See Appendix~\ref{app:proof_impossible_pairs}.

In Region~A, this reduces the viable strategy pairs from four to three: $\{(P,P), (P,D), (D,D)\}$. In Region~B (sub-regime B1), six of the nine pairs are viable: $\{(P,P), (P,D), (P,E), (D,D), (D,E), (E,E)\}$.

\subsection{Robust Transmitter Design}
\label{subsec:robust_design}

Alice selects the parameters to maximize the \emph{worst-case} distortion experienced by Eve over all strategy choices.

\begin{definition}[Worst-Case Eve Distortion]
\label{def:wc_distortion}
For given error probabilities $(\varepsilon\subscript{Eve,M}, \varepsilon\subscript{Eve,K})$ and ciphering rate $\alpha$, the worst-case Eve distortion is
\begin{equation}
\label{eq:wc_distortion}
    \Tilde{D}\eve\superscript{wc}(\alpha) \triangleq \varepsilon\subscript{Eve,M} D\subscript{loss} + (1-\varepsilon\subscript{Eve,M}) \min_{k \in \{1,2,3\}} \Delta_k(\alpha, \varepsilon\subscript{Eve,K}).
\end{equation}
\end{definition}

The $\min$ ensures that any Eve deviation yields even higher distortion.
Alice's robust design problem is:
\begin{maxi!}
    {n\subscript{M}, n\subscript{K}, \alpha}{\Tilde{D}\eve\superscript{wc}(\alpha)}
    {\label{prob:robust_design}}{}
    \addConstraint{\Tilde{D}\bob^* \leq \Tilde{D}\superscript{th}\bob \label{con:bob_threshold}}
    \addConstraint{n\subscript{M}, n\subscript{K} \in \mathbb{Z}^+, \; \alpha \in [0,1] \label{con:feasibility}},
\end{maxi!}
where $\Tilde{D}\bob^* = \min_k \Delta_k(\alpha, \varepsilon\subscript{Bob,K})$ is Bob's distortion under his own optimal (cooperative) decryption.

Among the three branches, $\Delta_1(\alpha)$ increases strictly in $\alpha$ with slope $\varepsilon\subscript{Eve,K} D\subscript{conf}$, while $\Delta_2(\alpha)$ and $\Delta_3(\alpha)$ decrease strictly, with slopes $-(1-\varepsilon\subscript{Eve,K})D\subscript{loss}$ and $-[1 - \varepsilon\subscript{Eve,K}(\mathcal{S}-2)/(\mathcal{S}-1)]D\subscript{conf}$ respectively. Their lower envelope $\min_k \Delta_k(\alpha)$ is a concave piecewise-linear tent, and its maximum lies at the active switching boundary. This monotonicity structure underpins the closed-form solution below.

\begin{theorem}[Robust Operating Point]
\label{thm:robust_alpha}
The worst-case-optimal ciphering rate for fixed $(\varepsilon\subscript{Eve,M}, \varepsilon\subscript{Eve,K})$ is
\begin{equation}
\label{eq:robust_alpha}
    \alpha^{\dagger} = \begin{cases}
    \dfrac{\rho}{\varepsilon\subscript{Eve,K}(1-\rho) + \rho}, & \text{if } \rho \leq 1 - \dfrac{1}{\mathcal{S}}, \\[10pt]
    \dfrac{\mathcal{S}-1}{\varepsilon\subscript{Eve,K} + \mathcal{S}-1}, & \text{if } \rho > 1 - \dfrac{1}{\mathcal{S}},
    \end{cases}
\end{equation}
and the corresponding worst-case distortion (excluding the $\varepsilon\subscript{Eve,M}$ term) is
\begin{equation}
\label{eq:robust_distortion}
    \Delta^{\dagger} = \begin{cases}
    \dfrac{\varepsilon\subscript{Eve,K} \rho D\subscript{conf}}{\varepsilon\subscript{Eve,K}(1-\rho) + \rho}, & \text{if } \rho \leq 1 - \dfrac{1}{\mathcal{S}}, \\[10pt]
    \dfrac{\varepsilon\subscript{Eve,K} (\mathcal{S}-1) D\subscript{conf}}{\varepsilon\subscript{Eve,K} + \mathcal{S}-1}, & \text{if } \rho > 1 - \dfrac{1}{\mathcal{S}}.
    \end{cases}
\end{equation}
Both expressions coincide at $\rho = 1 - 1/\mathcal{S}$.
\end{theorem}

\noindent\emph{Proof:} See Appendix~\ref{app:proof_robust_alpha}.

\begin{corollary}[Security Guarantee]
\label{cor:security_guarantee}
At the robust operating point $\alpha^{\dagger}$, Eve's distortion under any strategy satisfies
\begin{equation}
    \Tilde{D}\eve(\alpha^{\dagger}, \beta_k=1) \geq \Tilde{D}\eve\superscript{wc}(\alpha^{\dagger}), \quad \forall k \in \{1,2,3\},
\end{equation}
where the equality holds with a distortion-minimizing Eve.
\end{corollary}

The robust operating point provides a guaranteed lower bound on Eve's distortion (Fig.~\ref{fig:lower_envelope}).

\subsection{Stackelberg Equilibrium and Best-Response Dynamics}
\label{subsec:stackelberg}

The robust operating point $\alpha^{\dagger}$ is a Stackelberg equilibrium of the game where Alice (leader) commits to $(n\subscript{M}, n\subscript{K}, \alpha)$ and Eve (follower) best-responds.

\begin{theorem}[Stackelberg Equilibrium]
\label{thm:stackelberg}
The strategy profile $(\alpha^{\dagger}, \beta^*\eve, \beta^*\bob)$, where $\alpha^{\dagger}$ is the robust operating point from Theorem~\ref{thm:robust_alpha}, constitutes a Stackelberg equilibrium of the \ac{pld} design game. Specifically:
\begin{enumerate}[label=(\roman*)]
    \item \textbf{Eve's best response:} At $\alpha = \alpha^{\dagger}$, Eve is indifferent between the two strategies adjacent to the active switching boundary, i.e., $\Delta_j(\alpha^{\dagger}) = \Delta_k(\alpha^{\dagger})$ for the binding pair $(j,k)$. Any mixture over them two is a best response.
    \item \textbf{Bob's consistency:} The threshold ordering $\alpha_{12}^*(\varepsilon\subscript{Eve,K}) < \alpha_{12}^*(\varepsilon\subscript{Bob,K})$ from \eqref{eq:threshold_ordering} ensures that Bob remains in Perception at $\alpha^{\dagger}$, since $\alpha^{\dagger} = \alpha_{12}^*(\varepsilon\subscript{Eve,K}) < \alpha_{12}^*(\varepsilon\subscript{Bob,K})$.
    \item \textbf{Alice's optimality:} Alice cannot increase $\Tilde{D}\eve\superscript{wc}$ by unilaterally deviating from $\alpha^{\dagger}$, since $\alpha^{\dagger}$ maximizes the concave lower envelope $\min_k \Delta_k(\alpha)$ by Theorem~\ref{thm:robust_alpha}.
\end{enumerate}
\end{theorem}

\noindent\emph{Proof:} See Appendix~\ref{app:proof_stackelberg}.

\emph{Remark (tie-breaking irrelevance).} Eve's best response at $\alpha^{\dagger}$ is not unique, which normally requires distinguishing a strong Stackelberg equilibrium, where the follower resolves ties in the leader's favor, from a weak one, where the follower resolves ties against the leader. The distinction is void here. Alice's payoff equals $\min_k \Delta_k(\alpha)$, and at $\alpha^{\dagger}$ every pure strategy in the minimizing set, as well as every mixture over them, yields the same value $\Delta^{\dagger}$. Theorem~\ref{thm:stackelberg} therefore holds under any tie-breaking rule.

Alice need not predict Eve's strategy, only that Eve minimizes her distortion. Deviations only increase Eve's distortion (Corollary~\ref{cor:security_guarantee}).

\begin{theorem}[Iterative Best-Response Suboptimality]
\label{thm:best_response}
If Alice and Eve alternate best responses (Alice solving the \ac{lp} relaxation of \eqref{prob:robust_design} for a fixed Eve strategy, Eve updating given Alice's $\alpha$), the iteration does not converge:
\begin{enumerate}[label=(\roman*)]
    \item \textbf{Overshoot:} The \ac{lp} boundary solutions push $\alpha$ to extremes: $\alpha\superscript{o} \gg \alpha^{\dagger}$ when assuming Perception, and $\alpha\superscript{o} = 0 < \alpha^{\dagger}$ when assuming a defensive strategy. This overshoots the switching surface $\mathcal{B}_{12}$ (or $\mathcal{B}_{13}$) in both directions.
    \item \textbf{Periodicity:} The overshoot induces a strategy switch at each iteration. Since $\alpha$ alternates between a high value (above $\alpha^{\dagger}$) and zero (below $\alpha^{\dagger}$), the dynamics are eventually periodic with period~$2$.
    \item \textbf{Suboptimality:} The time-averaged Eve distortion over one oscillation cycle satisfies
    \begin{equation}
    \label{eq:suboptimality}
        \bar{D}\eve = \frac{1}{2}\bigl[\Delta_j(\alpha\superscript{high}) + \Delta_k(\alpha\superscript{low})\bigr] < \Delta^{\dagger} = \Tilde{D}\eve\superscript{wc}(\alpha^{\dagger}),
    \end{equation}
    where $\alpha\superscript{high}$ and $\alpha\superscript{low}$ are the two iterates and $\Delta^{\dagger}$ is the robust distortion from \eqref{eq:robust_distortion}. The strict inequality follows from the concavity of the lower envelope.
\end{enumerate}
\end{theorem}

\noindent\emph{Proof:} See Appendix~\ref{app:proof_best_response}.

The \ac{lp} drives each iterate to a boundary overshooting $\alpha^{\dagger}$, whereas the equilibrium is achieved in closed form via Theorem~\ref{thm:robust_alpha}. Sec.~\ref{subsec:oscillation} quantifies the gap.

\begin{figure*}[!htpb]
    \centering
    \begin{subfigure}[b]{0.495\linewidth}
        \centering
        \includegraphics[width=.85\linewidth]{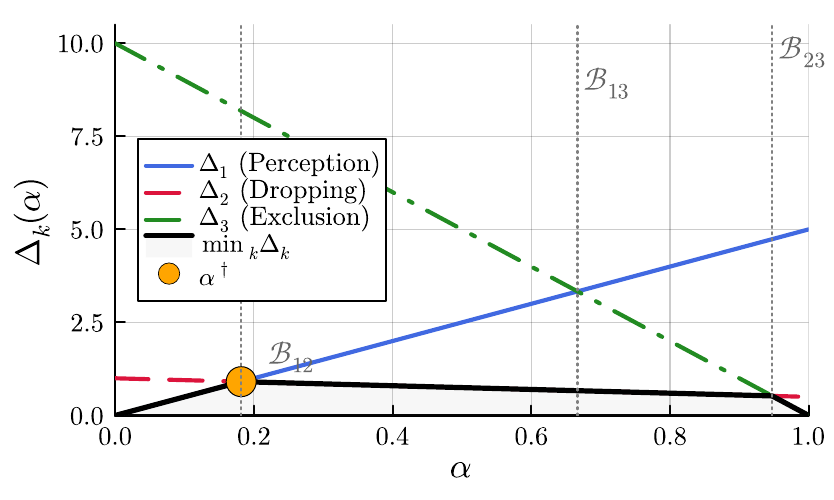}
        \caption{$\mathcal{S} = 2$ (Region~B): all three strategies active.}
        \label{fig:lower_envelope_S2}
    \end{subfigure}
    \hfill
    \begin{subfigure}[b]{0.495\linewidth}
        \centering
        \includegraphics[width=.85\linewidth]{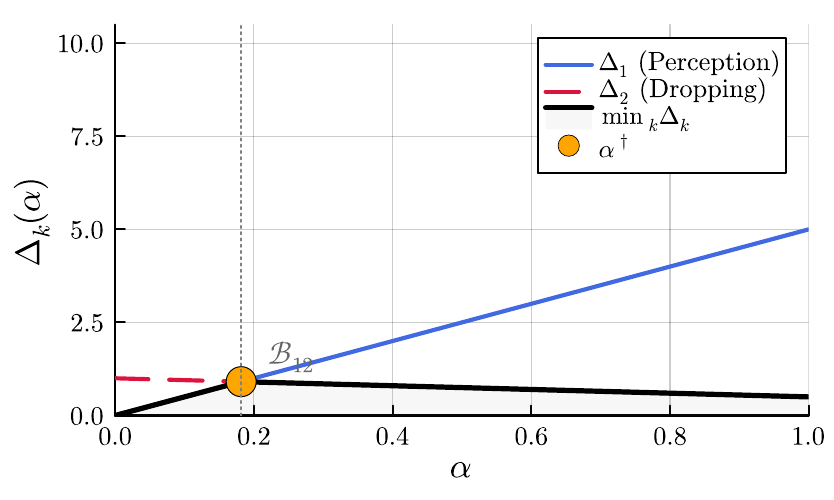}
        \caption{$\mathcal{S} = 2^{16}$ (Region~A): Exclusion dominated.}
        \label{fig:lower_envelope_S65536}
    \end{subfigure}
    \caption{Lower envelope $\min_k \Delta_k(\alpha)$ and robust operating point $\alpha^{\dagger}$ ($\varepsilon\subscript{K} = 0.5$, $\rho = 0.1$).}
    \label{fig:lower_envelope}
\end{figure*}

Given $\alpha^{\dagger}$, it remains to optimize $(n\subscript{M}, n\subscript{K})$. Sec.~\ref{sec:strategy_optimization} derives closed-form solutions for each strategy pair, assembles them into an iterative algorithm, and confirms that the iteration oscillates as predicted by Theorem~\ref{thm:best_response}.

\section{Strategy Optimization}
\label{sec:strategy_optimization}

We solve the resource-allocation and ciphering-rate sub-problems in closed form for a fixed strategy pair (Sections~\ref{sec:strategy_optimization}-A, B), then generalize to all viable pairs (Section~\ref{sec:strategy_optimization}-C). An iterative algorithm (Section~\ref{subsec:algorithm}) alternates strategy prediction with joint $(n\subscript{M}, n\subscript{K}, \alpha)$ optimization; as Theorem~\ref{thm:best_response} predicted, it oscillates (Section~\ref{subsec:oscillation}). The robust $\alpha^{\dagger}$ combined with the closed-form $(n\subscript{M}, n\subscript{K})$ bypasses the iteration entirely.

\subsection{Initial Resource Allocation Optimization}
We first maximize the distortion of \emph{Eve} for a given strategy pair, starting with the deterministic decryptor (Perception):
\begin{maxi!}
	{n\subscript{M}, n\subscript{K} \in\mathbb{Z}^+}{D\eve} {\label{obj:deceptive_distortion}}{}{}
    \addConstraint{\varepsilon\subscript{Bob,M}\leqslant \varepsilon\superscript{th}\subscript{Bob,M}, \; \varepsilon\subscript{Eve,M}\leqslant \varepsilon\superscript{th}\subscript{Eve,M} \label{con:err_message}}
	\addConstraint{\varepsilon\subscript{Bob,K}\leqslant \varepsilon\superscript{th}\subscript{Bob,K}, \; \varepsilon\subscript{Eve,K}\geqslant \varepsilon\superscript{th}\subscript{Eve,K} \label{con:err_key}}
	\addConstraint{D\bob \leqslant D\bob\superscript{th} \label{con:D_bob}}
\end{maxi!}
where $D\eve=\varepsilon_{\mathrm{Eve},\mathrm{M}}D\subscript{loss}+\alpha (1-\varepsilon_{\mathrm{Eve},\mathrm{M}})\varepsilon_{\mathrm{Eve},\mathrm{K}}D\subscript{conf}$. 


We adopt the finite-blocklength normal approximation of~\cite{polyanskiy2010channel}, under which the error probability $\varepsilon_{i,j}$ for payload $d_j$ transmitted over blocklength $n_j$ is
    $\varepsilon_{i,j} = Q \left(\sqrt{\frac{n_j}{V(\gamma_{i})}}(\mathcal{C}(\gamma_{i})-\frac{d_j}{n_j}) \ln{2}\right)$,
where  $Q(x)=\frac{1}{\sqrt{2 \pi}} \int_{x}^{\infty} e^{-t^2/2} dt$ is the Q-function in statistic, $\mathcal{C}(\gamma_i)=\log_2(1+\gamma_i)$ is the Shannon capacity, ${V}(\gamma_i)=1-\frac{1}{{{\left( 1+{{\gamma_i }} \right)}^{2}}}$ is the channel dispersion with \ac{snr} $\gamma_i=\frac{z_iP}{\sigma^2}$.

\begin{theorem}
\label{thm:D_eve_maximum}
The maximum of $D\eve$ is achieved at $(n\subscript{M}\superscript{max},n\subscript{K}\superscript{min})$ when $\alpha\geqslant \frac{D\subscript{loss}}{\varepsilon\subscript{Eve,\mathrm{K}}\superscript{max}D\subscript{conf}}$, while achieved at $(n\subscript{M}\superscript{min},n\subscript{K}\superscript{min})$ when $\alpha<\frac{D\subscript{loss}}{\varepsilon\subscript{Eve,\mathrm{K}}\superscript{max}D\subscript{conf}}$.
\end{theorem}

\noindent\emph{Proof:} See Appendix~\ref{app:proof_D_eve_monotonicity}.

The optimum always lies on the boundary: the key coding rate is maximized (to prevent Eve from decoding the key), while the ciphertext coding rate depends on $\alpha$. When $\alpha$ is large, confusion distortion dominates, so the ciphertext rate should be minimized to ensure correct decoding; when $\alpha$ is small, loss distortion dominates, and the ciphertext rate should be maximized.

To state the closed-form solution, we introduce the auxiliary function $\phi(\varepsilon_{i,j})$, which maps a target error probability to the required blocklength given payload $d_j$ and $\gamma_i$ \cite{polyanskiy2010channel}:
    \begin{equation}
        \begin{aligned}
        \phi(\varepsilon_{i,j})&=\Bigg(\frac{\log_{2}{e} \mathcal{Q}^{-1}(\varepsilon_{i,j})\sqrt{V_i}}{2\mathcal{C}_i} \\
            &+\frac{\sqrt{\left[(\log_{2}{e} \mathcal{Q}^{-1}(\varepsilon_{i,j})\right]^2V_i+4d_j\mathcal{C}_i}}{2\mathcal{C}_i}\Bigg)^2.
        \end{aligned}
    \end{equation}
The boundary blocklengths are:
    \begin{align}
    \label{eq:n_K^min}
        n\subscript{K}\superscript{min}&=\phi\!\left(\min \left\{\varepsilon\subscript{Bob,K}\superscript{th},\frac{D\bob\superscript{th}}{\alpha D\subscript{conf}}\right\}\right), \\
    \label{eq:n_M^min}
        n\subscript{M}\superscript{min}&= \min \left\{\phi\!\left(\frac{D\superscript{th}\bob}{D\subscript{loss}}\right),\phi\!\left(\varepsilon\subscript{Eve,M}\superscript{th}\right)\right\}.
    \end{align}
The maximum of $D\eve$ is then:
{\footnotesize
    \begin{equation}
    \label{eq:D_Eve_max_solution}
    D\eve\superscript{max}=
     \begin{cases}
     \varepsilon\subscript{Eve,M}\superscript{max}+ \\ \alpha \left[1-\varepsilon\subscript{Eve,M}\superscript{max}\right]\varepsilon\subscript{Eve,K}\superscript{max}D\subscript{conf}  & \text{if } \alpha < \frac{D\subscript{loss}}{\varepsilon\subscript{Eve,K}\superscript{max}D\subscript{conf}} \\
\varepsilon\subscript{Eve,M}\superscript{min}+ \\ \alpha \left[1-\varepsilon\subscript{Eve,M}\superscript{min}\right]
\varepsilon\subscript{Eve,K}\superscript{max}D\subscript{conf} &  \text{if } \alpha \geqslant \frac{D\subscript{loss}}{\varepsilon\subscript{Eve,K}\superscript{max}D\subscript{conf}}
\end{cases}
\end{equation}
}
where:
{\small
    \begin{align}
        \begin{split}            \varepsilon\subscript{Eve,K}\superscript{max}=&\mathcal{Q}\left(\ln 2\sqrt{\frac{n\subscript{K}\superscript{min}}{V\eve}}\left(\mathcal{C}\eve-\frac{d\subscript{K}}{n\subscript{K}\superscript{min}}\right) \right),
        \end{split}
        \\
        \begin{split}
        \varepsilon\subscript{Eve,M}\superscript{max}=&\mathcal{Q}\left(\ln 2 \sqrt{\frac{n\subscript{M}\superscript{min}}{V\eve}}\left(\mathcal{C}\eve-\frac{d\subscript{M}}{n\subscript{M}\superscript{min}}\right) \right),
        \end{split}
        \\
        \varepsilon\subscript{Eve,M}\superscript{min}=&\mathcal{Q}\left(\ln 2\sqrt{\frac{n\subscript{M}\superscript{max}}{V\eve}}\left(\mathcal{C}\eve-\frac{d\subscript{M}}{n\subscript{M}\superscript{max}}\right) \right).
    \end{align}
}

The auxiliary function $\phi(\cdot)$ returns real-valued blocklengths, whereas the feasibility set requires $n\subscript{M}, n\subscript{K} \in \mathbb{Z}^+$. We round $\phi(\cdot)$ up to the nearest integer in practice. At the blocklengths considered in Sec.~\ref{sec:numerical} ($n \gtrsim 100$), the unit-slot rounding perturbs $\varepsilon_{i,j}$ negligibly and leaves both the active strategy regime and the closed-form robust $\alpha^{\dagger}$ unchanged.

\subsection{Ciphering Rate Optimization}

The resource allocation solution in Sec.~\ref{sec:strategy_optimization}-A expresses $D\eve\superscript{max}$ as a function of the ciphering rate~$\alpha$; it remains to choose $\alpha$ itself. Alice knows Bob's \ac{csi} exactly and Eve's statistically, so she can predict both receivers' optimal strategies as functions of~$\alpha$. She then optimizes:
\begin{maxi!}
 {\alpha}{\min \Tilde{D}\eve(\mathbb{E}\{\varepsilon\subscript{Eve,M}\},\mathbb{E}\{\varepsilon\subscript{Eve,K}\})} {\label{Problem_min_Tilde_D_Eve_w.r.t.alpha}}{\label{obj:min_D_Eve}}{}
	\addConstraint{\min \Tilde{D}\bob \leqslant \Tilde{D}\superscript{th}\bob},
\end{maxi!}
Both the objective and constraint are linear in~$\alpha$, so the optimum lies on the boundary.
When $\beta\superscript{Eve}_1=1$, $\min \Tilde{D}\eve$ increases in~$\alpha$, giving:
\begin{equation}
\label{eq:alpha^o_beta_1}
    \alpha\superscript{\mathrm{o}}=\begin{cases}
    \min \left\{\frac{\Tilde{D}\superscript{th}\bob-\varepsilon\subscript{Bob,M}D\subscript{loss}}{(1-\varepsilon\subscript{Bob,M})\varepsilon\subscript{Bob,K}D\subscript{conf}},1\right\} & \text{if } \beta_1\superscript{Bob}=1  \\ 1 & \text{if } \beta_2\superscript{Bob}=1\\
    1 & \text{if } \beta_3\superscript{Bob}=1.
    \end{cases}
\end{equation}

When $\beta\superscript{Eve}_2=1$ or $\beta\superscript{Eve}_3=1$, $\min \Tilde{D}\eve$ decreases in~$\alpha$, giving:
{\scriptsize
\begin{equation}
\label{eq:alpha^o_beta_2and3}
 \alpha\superscript{\mathrm{o}}=   \begin{cases}
0  & \text{if } \beta_1\superscript{Bob}=1 \\
 \frac{D\subscript{loss}-\Tilde{D}\superscript{th}\bob}{(1-\varepsilon\subscript{Bob,K})(1-\varepsilon\subscript{Bob,M})D\subscript{loss}}  & \text{if } \beta_2\superscript{Bob}=1 \\
\max\left\{\frac{(\mathcal{S}-1)}{\left[(\mathcal{S}-2)\varepsilon\subscript{Bob,K}-(\mathcal{S}-1)\right]}\cdot \right.\\
\left.\frac{\left[\Tilde{D}\superscript{th}\bob-\varepsilon\subscript{Bob,M}D\subscript{loss}-(1-\varepsilon\subscript{Bob,M})D\subscript{conf}\right]}{(1-\varepsilon\subscript{Bob,M})D\subscript{conf}},0\right\}  & \text{if } \beta_3\superscript{Bob}=1.
\end{cases}
\end{equation}}

\subsection{Adaptive Resource Allocation Optimization}
The preceding two subsections solved the ciphering-rate and resource-allocation sub-problems for a single strategy pair. Since the iterative algorithm (Sec.~\ref{subsec:algorithm}) re-predicts strategies each iteration, we need the optimal boundary $(n\subscript{M}, n\subscript{K})$ for every viable (Bob, Eve) combination. Given $\alpha\superscript{\mathrm{o}}$, we optimize:
\begin{maxi!}
 {n\subscript{M},n\subscript{K}}{\min \Tilde{D}\eve(\mathbb{E}\{\varepsilon\subscript{Eve,M}\},\mathbb{E}\{\varepsilon\subscript{Eve,K}\},\alpha^{\mathrm{o}})} {\label{Problem_min_Tilde_D_Eve_w.r.t.nMnK}}{\label{obj:max_min_Tilde_D_Eve}}{}
	\addConstraint{\eqref{con:err_message}, \eqref{con:err_key} \label{con:error_probability}}
	\addConstraint{\min \Tilde{D}\bob (\alpha^{\mathrm{o}}) \leqslant \Tilde{D}\bob\superscript{th} \label{con:min_Tilde_D_bob}}.
\end{maxi!}

The solution depends on the (Bob, Eve) strategy pair. In principle, all $3 \times 3 = 9$ combinations must be considered. However, Lemma~\ref{lem:impossible_pairs} eliminates those pairs where Bob adopts a more aggressive strategy than Eve: in Region~A, only $(P,P)$, $(P,D)$, and $(D,D)$ are viable; in Region~B, six of the nine pairs remain. We derive the boundary conditions for each of Bob’s three strategies below; the corresponding Eve-side optimality analysis follows.

\subsubsection{Boundary for Bob's Perception Strategy}

When $\beta_1\superscript{Bob}=1$, the distortion $\Tilde{D}\bob^{\beta_1}=\varepsilon\subscript{Bob,M}D\subscript{loss}+\alpha^{\mathrm{o}}(1-\varepsilon\subscript{Bob,M})\varepsilon\subscript{Bob,K}D\subscript{conf}$ coincides with~\eqref{con:D_bob}, simplifying the constraints to:
\begin{equation}
    \varepsilon\subscript{Bob,K}\leqslant\frac{\Tilde{D}\superscript{th}\subscript{Bob}}{\alpha^{\mathrm{o}} D\subscript{conf}}, \quad
    \varepsilon\subscript{Bob,M}\leqslant\frac{\Tilde{D}\bob\superscript{th}}{D\subscript{loss}}.
\end{equation}

The boundary blocklengths are:
\begin{align}
\begin{split}
    n\subscript{M}\superscript{min}=\min \left\{\phi\left(\varepsilon\subscript{Eve,M}\superscript{th}\right),\phi\left(\frac{\Tilde{D}\bob\superscript{th}}{D\subscript{loss}}\right)\right\}
    \end{split} \\
    \begin{split}
    n\subscript{K}\superscript{min}=\phi\left[\min \left(\varepsilon\subscript{Bob,K}\superscript{th},\frac{\Tilde{D}\superscript{th}\subscript{Bob}}{\alpha^{\mathrm{o}} D\subscript{conf}}\right)\right].
    \end{split}
\end{align}

\subsubsection{Boundary for Bob's Dropping Strategy}
When $\beta_2\superscript{Bob}=1$, $\Tilde{D}\bob^{\beta_2}=\varepsilon\subscript{Bob,M}D\subscript{loss}+(1-\varepsilon\subscript{Bob,M})\left[\varepsilon\subscript{Bob,K}\alpha^{\mathrm{o}}+(1-\alpha^{\mathrm{o}})\right]D\subscript{loss}$. Since $\Tilde{D}\bob\superscript{th}<D\subscript{loss}$, the joint bound from~\eqref{con:min_Tilde_D_bob} decreases in each error probability; evaluating at the boundary yields the simplified bounds
    \begin{align}
        &\varepsilon\subscript{Bob,K}\leqslant\frac{\Tilde{D}\bob\superscript{th}+(\alpha^{\mathrm{o}}-1)D\subscript{loss}}{\alpha^{\mathrm{o}}D\subscript{loss}},\\
        &\varepsilon\subscript{Bob,M}\leqslant 1-\frac{D\subscript{loss}-\Tilde{D}\superscript{th}\bob}{\alpha^{\mathrm{o}}D\subscript{loss}}.
    \end{align}
    The boundary blocklengths are:
    \begin{align}
        &n\subscript{M}\superscript{min}=\min \left\{\phi\left(\varepsilon\subscript{Eve,M}\superscript{th}\right),\phi\left(1-\frac{D\subscript{loss}-\Tilde{D}\superscript{th}\bob}{\alpha^{\mathrm{o}}D\subscript{loss}}\right)\right\},\\
        &n\subscript{K}\superscript{min}=\phi\left[\min \left(\varepsilon\subscript{Bob,K}\superscript{th},\frac{\Tilde{D}\bob\superscript{th}+(\alpha^{\mathrm{o}}-1)D\subscript{loss}}{\alpha^{\mathrm{o}}D\subscript{loss}}\right)\right].
    \end{align}

\subsubsection{Boundary for Bob's Exclusion Strategy}
When $\beta_3\superscript{Bob}=1$, $\Tilde{D}\bob^{\beta_3}=\varepsilon\subscript{Bob,M}D\subscript{loss}+(1-\varepsilon\subscript{Bob,M})\left[\varepsilon\subscript{Bob,K}\frac{\alpha^{\mathrm{o}}(\mathcal{S}-2)}{\mathcal{S}-1}D\subscript{conf}+(1-\alpha^{\mathrm{o}})D\subscript{conf}\right]$. By the same monotonicity argument (using $\Tilde{D}\bob\superscript{th}<D\subscript{loss}(\mathcal{S}-1)$), the simplified bounds are
\begin{align}
    &\varepsilon\subscript{Bob,K}\leqslant\frac{\Tilde{D}\superscript{th}\bob-(1-\alpha^{\mathrm{o}})(\mathcal{S}-1)D\subscript{conf}}{\alpha^{\mathrm{o}}(\mathcal{S}-2)D\subscript{conf}},\\
    &\varepsilon\subscript{Bob,M}\leqslant \frac{\Tilde{D}\bob\superscript{th}-(1-\alpha^{\mathrm{o}})D\subscript{conf}}{D\subscript{loss}-(1-\alpha^{\mathrm{o}})D\subscript{conf}}.
\end{align}

Since $D\eve$ is always maximized at $n\subscript{K}\superscript{min}$ and $n\subscript{M}\superscript{max}$ is set by practical requirements, the boundary blocklengths for all three strategies are:

{\small
\begin{equation}
\label{eq:n_M^min_op}
n\subscript{M}\superscript{min}=
\begin{cases}
 \min \left\{\phi\left(\varepsilon\subscript{Eve,M}\superscript{th}\right),\phi\left(\frac{\Tilde{D}\bob\superscript{th}}{D\subscript{loss}}\right)\right\}   & \text{if } \beta_1\superscript{Bob}=1\\
 \begin{aligned}
&\min  \Bigg\{\phi\left(\varepsilon\subscript{Eve,M}\superscript{th}\right), \\
&\phi\left(1-\frac{D\subscript{loss}-\Tilde{D}\superscript{th}\bob}{\alpha^{\mathrm{o}}D\subscript{loss}}\right)\Bigg\} \end{aligned}   & \text{if } \beta_2\superscript{Bob}=1\\
\begin{aligned}
 &\min \Bigg\{\phi\left(\varepsilon\subscript{Eve,M}\superscript{th}\right), \\
 &\phi\left(\frac{\Tilde{D}\bob\superscript{th}-(1-\alpha^{\mathrm{o}})D\subscript{conf}}{D\subscript{loss}-(1-\alpha^{\mathrm{o}})D\subscript{conf}}\right)\Bigg\} \end{aligned}   & \text{if } \beta_3\superscript{Bob}=1,
\end{cases}
\end{equation}}

{\footnotesize
\begin{equation}
\label{eq:n_K^min_op}
n\subscript{K}\superscript{min}=
\begin{cases}
n\subscript{K}\superscript{min}=\phi\left[\min \left(\varepsilon\subscript{Bob,K}\superscript{th},\frac{\Tilde{D}\superscript{th}\subscript{Bob}}{\alpha^{\mathrm{o}} D\subscript{conf}}\right)\right]  & \text{if } \beta_1\superscript{Bob}=1\\
\begin{aligned}
 &\phi\Bigg[\min \bigg(\varepsilon\subscript{Bob,K}\superscript{th}, \\
 &\frac{\Tilde{D}\bob\superscript{th}+(\alpha^{\mathrm{o}}-1)D\subscript{loss}}{\alpha^{\mathrm{o}}D\subscript{loss}}\bigg)\Bigg] \end{aligned}  & \text{if } \beta_2\superscript{Bob}=1 \\
 \begin{aligned}
&\phi \Bigg[\min \bigg(\varepsilon\subscript{Bob,K}\superscript{th}, \\
&\frac{\Tilde{D}\superscript{th}\bob-(1-\alpha^{\mathrm{o}})(\mathcal{S}-1)D\subscript{conf}}{\alpha^{\mathrm{o}}(\mathcal{S}-2)D\subscript{conf}}\bigg)\Bigg]  \end{aligned}   & \text{if } \beta_3\superscript{Bob}=1.
\end{cases}
\end{equation}}

We now determine the optimal boundary point for each of Eve's strategies.

\subsubsection{Optimal Point for Eve's Perception Strategy}
When $\beta_1\superscript{Eve}=1$, the objective reduces to~\eqref{eq:D_i_resource_allocation}, so the solution follows directly from~\eqref{eq:D_Eve_max_solution}.

\subsubsection{Optimal Point for Eve's Dropping Strategy}
When $\beta_2\superscript{Eve}=1$, the objective becomes:
\begin{equation}
\begin{split}
    \Tilde{D}\eve^{\beta_2} &=\varepsilon\subscript{Eve,M}D\subscript{loss} \\
    &+(1-\varepsilon\subscript{Eve,M})\left[\varepsilon\subscript{Eve,K}\alpha^{\mathrm{o}}+(1-\alpha^{\mathrm{o}})\right]D\subscript{loss}.
    \end{split}
\end{equation}

\begin{theorem}
\label{thm:dropping_monotonicity}
    If $\beta_2\superscript{Eve}=1$, $\Tilde{D}\eve^{\beta_2}$ is maximized at $(n\subscript{M}\superscript{min},n\subscript{K}\superscript{min})$.
\end{theorem}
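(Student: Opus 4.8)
The plan is to reduce the claim to the two monotonicity facts already established in the proof of Theorem~\ref{theorem:D_eve_maximum}. First I would collect the two $D\subscript{loss}$ terms in $\breve{D}\eve^{\beta_2}$ and simplify, obtaining the compact form
\[
\breve{D}\eve^{\beta_2}
= D\subscript{loss}\Bigl[1-\alpha^{\mathrm{o}}\bigl(1-\varepsilon\subscript{Eve,M}\bigr)\bigl(1-\varepsilon\subscript{Eve,K}\bigr)\Bigr].
\]
Since $D\subscript{loss}>0$, $\alpha^{\mathrm{o}}\in[0,1]$, and $\varepsilon\subscript{Eve,M},\varepsilon\subscript{Eve,K}\in[0,1]$, the bracketed quantity is maximized exactly when the nonnegative product $(1-\varepsilon\subscript{Eve,M})(1-\varepsilon\subscript{Eve,K})$ is minimized, i.e.\ when $\varepsilon\subscript{Eve,M}$ and $\varepsilon\subscript{Eve,K}$ are each as large as possible, and these two requirements decouple.

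Next I would phrase this as coordinate-wise monotonicity by differentiating the original expression directly: only $\varepsilon\subscript{Eve,M}$ depends on $n\subscript{M}$ and only $\varepsilon\subscript{Eve,K}$ depends on $n\subscript{K}$, so $\partial\breve{D}\eve^{\beta_2}/\partial n\subscript{M}=D\subscript{loss}\,\alpha^{\mathrm{o}}(1-\varepsilon\subscript{Eve,K})\,(\partial\varepsilon\subscript{Eve,M}/\partial n\subscript{M})$ and $\partial\breve{D}\eve^{\beta_2}/\partial n\subscript{K}=D\subscript{loss}\,\alpha^{\mathrm{o}}(1-\varepsilon\subscript{Eve,M})\,(\partial\varepsilon\subscript{Eve,K}/\partial n\subscript{K})$. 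The factor $\partial\varepsilon_{i,j}/\partial n_j<0$ was already shown in the proof of Theorem~\ref{theorem:D_eve_maximum} (chain rule through the $Q$-function argument $w_{i,j}$), and every remaining factor is nonnegative, so both partial derivatives are $\le 0$; hence $\breve{D}\eve^{\beta_2}$ is non-increasing in each of $n\subscript{M}$ and $n\subscript{K}$ over the whole feasible range.

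It then remains to verify that the lower-left corner is feasible. As derived in the three preceding subsections, constraints \eqref{con:max_key_length}--\eqref{con:err_eve_key_threshold} together with \eqref{con:min breve D_bob} impose only lower bounds on the two blocklengths---namely $n\subscript{M}\ge n\subscript{M}\superscript{min}$ and $n\subscript{K}\ge n\subscript{K}\superscript{min}$, with $n\subscript{M}\superscript{min}$, $n\subscript{K}\superscript{min}$ given by \eqref{eq:n_M^min_op}--\eqref{eq:n_K^min_op} for the predicted $\beta_k\superscript{Bob}$---and these bounds are mutually independent, so $(n\subscript{M}\superscript{min},n\subscript{K}\superscript{min})$ lies in the feasible set. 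Combining feasibility with the coordinate-wise non-increasing property, the maximum of $\breve{D}\eve^{\beta_2}$ is attained at $(n\subscript{M}\superscript{min},n\subscript{K}\superscript{min})$, which is the assertion.

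The step I would treat most carefully is the sign bookkeeping, since nothing else in the argument is hard: one must remember that $\alpha^{\mathrm{o}}$ here is the fixed value supplied by the ciphering-probability optimization and may be zero (the $\beta_1\superscript{Bob}=1$ branch of \eqref{eq:alpha^o_beta_2and3}), in which case $\breve{D}\eve^{\beta_2}\equiv D\subscript{loss}$ and the statement holds trivially; for $\alpha^{\mathrm{o}}>0$ the monotonicity is in fact strict whenever $\varepsilon\subscript{Eve,M},\varepsilon\subscript{Eve,K}<1$, so $(n\subscript{M}\superscript{min},n\subscript{K}\superscript{min})$ is then the unique maximizer. No genuine estimate is needed; the work is entirely in the reduction to the two monotonicities of Theorem~\ref{theorem:D_eve_maximum}.
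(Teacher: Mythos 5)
Your proposal is correct and follows essentially the same route as the paper: compute the two coordinate-wise partial derivatives of $\breve{D}\eve^{\beta_2}$, invoke the already-established fact that $\partial\varepsilon_{i,j}/\partial n_j<0$, and conclude that the objective is non-increasing in each blocklength so the maximum sits at $(n\subscript{M}\superscript{min},n\subscript{K}\superscript{min})$. Your extra touches---the compact form $D\subscript{loss}[1-\alpha^{\mathrm{o}}(1-\varepsilon\subscript{Eve,M})(1-\varepsilon\subscript{Eve,K})]$, the explicit feasibility of the corner, and the degenerate case $\alpha^{\mathrm{o}}=0$---are sound refinements of the same argument rather than a different one.
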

\noindent\emph{Proof:} See Appendix~\ref{app:proof_dropping_monotonicity}. 

The maximum is:
    {\small
    \begin{equation}
        \begin{split}
           \max \Tilde{D}\eve^{\beta_2}&
           =\varepsilon\subscript{Eve,M}\superscript{max}D\subscript{loss} \\
    &+(1-\varepsilon\subscript{Eve,M}\superscript{max})\left[\varepsilon\subscript{Eve,K}\superscript{max}\alpha^{\mathrm{o}}+(1-\alpha^{\mathrm{o}})\right]D\subscript{loss},
        \end{split}
    \end{equation}}
    where \begin{align}
        \varepsilon\subscript{Eve,M}\superscript{max}=\mathcal{Q}\left(\sqrt{\frac{n\subscript{M}\superscript{min}}{V\eve}}\left(\mathcal{C}\eve-\frac{d\subscript{M}}{n\subscript{M}\superscript{min}}\right)\ln 2 \right),\\
        \varepsilon\subscript{Eve,K}\superscript{max}=\mathcal{Q}\left(\sqrt{\frac{n\subscript{K}\superscript{min}}{V\eve}}\left(\mathcal{C}\eve-\frac{d\subscript{K}}{n\subscript{K}\superscript{min}}\right)\ln 2 \right).
    \end{align}
    
    \subsubsection{Optimal Point for Eve's Exclusion Strategy}
When $\beta_3\superscript{Eve}=1$, the objective becomes:
\begin{align}
    \begin{split}
    \Tilde{D}^{\beta_3}\subscript{Eve}&=\varepsilon\subscript{Eve,M}D\subscript{loss} \\
    &+(1-\varepsilon\subscript{Eve,M}) \cdot \\
    &\left[\varepsilon\subscript{Eve,K}\frac{\alpha^{\mathrm{o}}(\mathcal{S}-2)}{\mathcal{S}-1}D\subscript{conf}+(1-\alpha^{\mathrm{o}})D\subscript{conf}\right].
    \end{split}
    \end{align}
    \begin{theorem}
    \label{thm:exclusion_monotonicity}
        If $\beta_3\superscript{Eve}=1$, the maximum of $\Tilde{D}\eve^{\beta_3}$ is obtained at $(n\subscript{M}\superscript{max},n\subscript{K}\superscript{min})$ when $\alpha^{\mathrm{o}}<\frac{D\subscript{conf}-D\subscript{loss}}{D\subscript{conf}(1-\varepsilon\subscript{Eve,K}\superscript{max}\frac{\mathcal{S}-2}{\mathcal{S}-1})}$, while achieved at $(n\subscript{M}\superscript{min},n\subscript{K}\superscript{min})$ when $\alpha^{\mathrm{o}}\geqslant\frac{D\subscript{conf}-D\subscript{loss}}{D\subscript{conf}(1-\varepsilon\subscript{Eve,K}\superscript{max}\frac{\mathcal{S}-2}{\mathcal{S}-1})}$.
    \end{theorem}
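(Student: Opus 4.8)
The plan is to reuse the boundary-point strategy from the proof of Theorem~\ref{theorem:D_eve_maximum}: show that $\breve{D}\eve^{\beta_3}$ is monotone in $n\subscript{K}$ and in $n\subscript{M}$ separately, conclude that the maximizer sits at a corner of the feasible box, and then identify which corner from the sign of a single coefficient that ultimately depends only on $\alpha\superscript{o}$.

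First I would handle $n\subscript{K}$. Since $\varepsilon\subscript{Eve,K}$ depends on $n\subscript{K}$ alone, \eqref{partial err/partial n} gives
\begin{equation*}
\frac{\partial \breve{D}\eve^{\beta_3}}{\partial n\subscript{K}}=\frac{\partial\varepsilon\subscript{Eve,K}}{\partial n\subscript{K}}\,(1-\varepsilon\subscript{Eve,M})\,\frac{\alpha\superscript{o}(\mathcal{S}-2)}{\mathcal{S}-1}\,D\subscript{conf}.
\end{equation*}
Every factor after $\partial\varepsilon\subscript{Eve,K}/\partial n\subscript{K}$ is nonnegative ($\mathcal{S}\geqslant 2$, $\alpha\superscript{o}\in[0,1]$, $D\subscript{conf}>0$, $\varepsilon\subscript{Eve,M}<1$), while the first factor is strictly negative, so $\breve{D}\eve^{\beta_3}$ is nonincreasing in $n\subscript{K}$ and the optimum is attained at $n\subscript{K}\superscript{min}$, whose value for each predicted strategy of \emph{Bob} is already recorded in \eqref{eq:n_K^min_op}. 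Fixing $n\subscript{K}=n\subscript{K}\superscript{min}$ freezes $\varepsilon\subscript{Eve,K}$ at $\varepsilon\subscript{Eve,K}\superscript{max}$, which from then on acts as a constant.

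Second I would handle $n\subscript{M}$. Writing $\breve{D}\eve^{\beta_3}=\varepsilon\subscript{Eve,M}(D\subscript{loss}-C)+C$ with $C:=D\subscript{conf}\bigl[1-\alpha\superscript{o}\bigl(1-\varepsilon\subscript{Eve,K}\superscript{max}\frac{\mathcal{S}-2}{\mathcal{S}-1}\bigr)\bigr]$ independent of $n\subscript{M}$, the chain rule gives
\begin{equation*}
\frac{\partial \breve{D}\eve^{\beta_3}}{\partial n\subscript{M}}=\frac{\partial\varepsilon\subscript{Eve,M}}{\partial n\subscript{M}}\,(D\subscript{loss}-C),
\end{equation*}
and since $\partial\varepsilon\subscript{Eve,M}/\partial n\subscript{M}<0$, this derivative has the sign of $C-D\subscript{loss}$. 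The remaining step is to rearrange $C>D\subscript{loss}$ into a threshold on $\alpha\superscript{o}$: the inequality reads $D\subscript{conf}\bigl[1-\alpha\superscript{o}(1-\varepsilon\subscript{Eve,K}\superscript{max}\frac{\mathcal{S}-2}{\mathcal{S}-1})\bigr]>D\subscript{loss}$, and dividing by the strictly positive quantity $D\subscript{conf}(1-\varepsilon\subscript{Eve,K}\superscript{max}\frac{\mathcal{S}-2}{\mathcal{S}-1})$, which is positive because $\varepsilon\subscript{Eve,K}\superscript{max}\leqslant 1$ and $\frac{\mathcal{S}-2}{\mathcal{S}-1}<1$, yields exactly $\alpha\superscript{o}<\frac{D\subscript{conf}-D\subscript{loss}}{D\subscript{conf}(1-\varepsilon\subscript{Eve,K}\superscript{max}\frac{\mathcal{S}-2}{\mathcal{S}-1})}$. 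In that regime $\breve{D}\eve^{\beta_3}$ is increasing in $n\subscript{M}$, so the maximum is at $(n\subscript{M}\superscript{max},n\subscript{K}\superscript{min})$; in the complementary regime it is decreasing in $n\subscript{M}$, so the maximum is at $(n\subscript{M}\superscript{min},n\subscript{K}\superscript{min})$, which is the assertion. The boundary values $n\subscript{M}\superscript{min}$ and $n\subscript{K}\superscript{min}$ are taken from \eqref{eq:n_M^min_op} and \eqref{eq:n_K^min_op}, and $n\subscript{M}\superscript{max}$ from the practical blocklength limit, exactly as in Theorem~\ref{theorem:D_eve_maximum}.

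I do not expect a genuine mathematical obstacle, since the argument is structurally identical to that of Theorem~\ref{theorem:D_eve_maximum}; the main care is bookkeeping. In particular, $\varepsilon\subscript{Eve,K}\superscript{max}$, and hence the threshold itself, depends on $\alpha\superscript{o}$ and on \emph{Bob}'s predicted strategy through $n\subscript{K}\superscript{min}$ in \eqref{eq:n_K^min_op}, so the two cases must be read as conditions to be checked once $\alpha\superscript{o}$ and that prediction are fixed. I would also flag the degenerate regime $D\subscript{conf}\leqslant D\subscript{loss}$, where the threshold is nonpositive, so that the second branch applies for every admissible $\alpha\superscript{o}\in[0,1]$; this is consistent with $C\leqslant D\subscript{conf}\leqslant D\subscript{loss}$ forcing $D\subscript{loss}-C\geqslant 0$ there.
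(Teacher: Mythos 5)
Your proposal is correct and follows essentially the same route as the paper: compute the two partial derivatives, observe that $\breve{D}\eve^{\beta_3}$ is monotonically decreasing in $n\subscript{K}$ (hence $n\subscript{K}\superscript{min}$ is optimal), and note that the sign of $\partial\breve{D}\eve^{\beta_3}/\partial n\subscript{M}$ is governed by the coefficient $D\subscript{loss}-C$, whose sign flips exactly at the stated threshold on $\alpha\superscript{o}$. Your rewriting via the constant $C$ and your remarks on the dependence of $\varepsilon\subscript{Eve,K}\superscript{max}$ on \emph{Bob}'s predicted strategy and on the degenerate case $D\subscript{conf}\leqslant D\subscript{loss}$ are harmless refinements of the same argument.
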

    \noindent\emph{Proof:} See Appendix~\ref{app:proof_exclusion_monotonicity}.

\subsection{Optimization Algorithm}\label{subsec:algorithm}
Sections~\ref{sec:strategy_optimization}-A through~C solved the optimization for a given strategy pair, but the pair itself depends on the operating parameters, creating a circular dependency. Alg.~\ref{alg:optimization} resolves this by iterating: at each step Alice predicts the receivers' strategies from the current parameters and then re-optimizes.

\begin{algorithm}[!htbp]
\scriptsize
\SetAlgoLined
	Input: $T$, $\Tilde{D}\superscript{th}\bob$, $\varepsilon_{\mathrm{Bob,M}}$, $\varepsilon_{\mathrm{Bob,K}}$, $\mathbb{E}\{\varepsilon_{\mathrm{Eve,M}}\}$,$\mathbb{E}\{\varepsilon_{\mathrm{Eve,K}}\}$  \\
	Initialize: $n\subscript{M}=n\subscript{M}\superscript{init}$, $n\subscript{K}=n\subscript{K}\superscript{init}$, $\alpha=\alpha\superscript{init}$, $\beta_1\superscript{Bob}=1$, $\beta_2\superscript{Bob}=0$, $\beta_3\superscript{Bob}=0$, $\beta_1\superscript{Eve}=1$, $\beta_2\superscript{Eve}=0$, $\beta_3\superscript{Eve}=0$, \\
    where
    $n\subscript{M}\superscript{init}=\min \left\{\phi\left(\varepsilon\subscript{Eve,M}\superscript{th}\right),\phi\left(\frac{\Tilde{D}\bob\superscript{th}}{D\subscript{loss}}\right)\right\}$,
    $n\subscript{K}\superscript{init}=\phi\left[\min \left(\varepsilon\subscript{Bob,K}\superscript{th},\frac{\Tilde{D}\superscript{th}\subscript{Bob}}{\alpha\superscript{init} D\subscript{conf}}\right)\right]$ \\
    \Do{$t\leqslant T$}
        {
        
        $(\beta_1^{\mathrm{Eve}(t)},\beta_2^{\mathrm{Eve}(t)}, \beta_3^{\mathrm{Eve}(t)}) \leftarrow \arg \underset{\beta_1,\beta_2, \beta_3}{\min} \Tilde{D}\eve(n\subscript{M}^{(t-1)},n\subscript{K}^{(t-1)})$, \\
        $(\beta_1^{\mathrm{Bob}(t)},\beta_2^{\mathrm{Bob}(t)}, \beta_3^{\mathrm{Bob}(t)}) \leftarrow \arg \underset{\beta_1,\beta_2, \beta_3}{\min} \Tilde{D}\bob(n\subscript{M}^{(t-1)},n\subscript{K}^{(t-1)})$ \\
        $\min \Tilde{D}\eve^{(t)}:=\Tilde{D}\eve(\beta_1^{\mathrm{Eve}(t)},\beta_2^{\mathrm{Eve}(t)}, \beta_3^{\mathrm{Eve}(t)}) $\\
        $\min \Tilde{D}\bob^{(t)}:=\Tilde{D}\bob(\beta_1^{\mathrm{Bob}(t)},\beta_2^{\mathrm{Bob}(t)}, \beta_3^{\mathrm{Bob}(t)}) $\\
        $\alpha^{\mathrm{o}(t)} \leftarrow\arg \underset{\alpha}{\max} \left[\min \Tilde{D}\eve^{(t)}\right]$ \\
$(n\subscript{M}^{(t)},n\subscript{K}^{(t)})\leftarrow\arg \underset{n\subscript{M},n\subscript{K}}{\max} \left[\min \Tilde{D}\eve^{(t)}(\alpha^{\mathrm{o}(t)})\right]$ \\
$t\leftarrow t+1$} 
        \textbf{return} $n\subscript{M}$, $n\subscript{K}$ and $\alpha^{\mathrm{o}}$
        \caption{Optimization of $n\subscript{M},n\subscript{K}$ and $\alpha^{\mathrm{o}}$}
        \label{alg:optimization}
\end{algorithm}

Alg.~\ref{alg:optimization} initializes with both receivers adopting Perception; simulations confirm the initial strategy choice does not affect the subsequent periodic behavior. At each iteration, Alice predicts both receivers’ strategies via~\eqref{eq:Tilde_D_i}, optimizes $\alpha$ via~\eqref{eq:alpha^o_beta_1}--\eqref{eq:alpha^o_beta_2and3}, and re-allocates $(n\subscript{M}, n\subscript{K})$ via~\eqref{eq:n_M^min_op}--\eqref{eq:n_K^min_op}. Since the predicted strategies may change each iteration, convergence is not guaranteed; the loop terminates after $T$ iterations. All per-iteration updates are closed-form, so the complexity is $\mathcal{O}(T)$, lower than the deception-rate formulation in~\cite{chen2025physical}.

\subsection{Oscillation Analysis and Robust Alternative}
\label{subsec:oscillation}

A natural question is whether Alg.~\ref{alg:optimization} converges to a stable operating point. Unfortunately, it does not: as Sec.~\ref{sec:numerical} will reveal, the algorithm exhibits periodic oscillation, and Theorem~\ref{thm:best_response} established that the resulting time-average is strictly suboptimal relative to~$\alpha^{\dagger}$. We now trace the oscillation to a structural property of the \ac{lp} formulation, which also points to the fix.

\begin{proposition}[Oscillation Characterization]
\label{prop:oscillation}
In the typical \ac{pls} operating regime where the channel gap $\gamma\subscript{Bob} - \gamma\subscript{Eve}$ is sufficiently large, Alg.~\ref{alg:optimization} produces periodic oscillation with period~2. The mechanism is:
\begin{enumerate}
    \item Assuming Eve adopts Perception: Alice maximizes $\alpha$ via \eqref{eq:alpha^o_beta_1}, pushing $\alpha\superscript{o}$ above the switching threshold $\alpha_{12}^*(\varepsilon\subscript{Eve,K})$ from \eqref{eq:B12}.
    \item Eve's predicted best response switches to a more defensive strategy (Dropping or Exclusion).
    \item Assuming Eve adopts the defensive strategy: Alice minimizes $\alpha$ via \eqref{eq:alpha^o_beta_2and3}, pushing $\alpha\superscript{o}$ below $\alpha_{12}^*(\varepsilon\subscript{Eve,K})$.
    \item Eve's predicted best response switches back to Perception.
\end{enumerate}
\end{proposition}

\noindent\emph{Proof:} See Appendix~\ref{app:proof_oscillation}.

\begin{remark}
The condition ``sufficiently large channel gap'' means that $\alpha\superscript{o}$ from \eqref{eq:alpha^o_beta_1} exceeds $\alpha_{12}^*(\varepsilon\subscript{Eve,K})$. When $\gamma\subscript{Bob} - \gamma\subscript{Eve}$ is small, $\varepsilon\subscript{Bob,K}$ is large, $\alpha\superscript{o}$ remains below the switching threshold, and the algorithm converges in one step to a Perception-only equilibrium. In typical \ac{pls} deployments, Alice has a substantial channel advantage, so the oscillation regime is the practically relevant case.
\end{remark}

The root cause is that the \ac{lp} boundary solutions~\eqref{eq:alpha^o_beta_1} and~\eqref{eq:alpha^o_beta_2and3} always push $\alpha$ to an extreme, overshooting the switching surface in both directions. The robust operating point $\alpha^{\dagger}$ (Theorem~\ref{thm:robust_alpha}), by contrast, sits precisely \emph{at} the switching boundary and avoids oscillation entirely.

\textbf{Robust alternative.} Alice directly computes $\alpha^{\dagger}$ from~\eqref{eq:robust_alpha} and optimizes $(n\subscript{M}, n\subscript{K})$ for that fixed $\alpha^{\dagger}$, yielding a provably higher worst-case guarantee $\Tilde{D}\eve\superscript{wc}$ than the iterative algorithm (Fig.~\ref{fig:oscillation_traces}).

\section{Numerical Evaluation}
\label{sec:numerical}
We verify three aspects: \begin{enumerate*}[label=\emph{\roman*)}]
    \item the optimization landscape matches the predicted structure, \item the robust design outperforms the iterative algorithm, and \item the advantage extends to fading channels. Table~\ref{tab:sim_setup} lists the simulation parameters; the distortion ratio $\rho = D\subscript{loss}/D\subscript{conf} = 0.1$ reflects that undetected deception is ten times costlier than message loss, and the qualitative regime behavior depends only on~$\rho$.
\end{enumerate*}
\begin{table}[!htpb]
	\centering
	\caption{Simulation setup}
	\label{tab:sim_setup}
	\begin{tabular}{m{2.1cm} m{1cm} m{4.5cm}}
		\toprule[2px]
		\textbf{Parameter} 		&\textbf{Value} 		&\textbf{Remark}\\
		\midrule[1px]
		$\sigma^2$					&\SI{1}{\milli\watt}			&Noise power\\
  \rowcolor{gray!30}
		$B$								 &\SI{1}{\hertz}			&Normalized to unity bandwidth\\
		$d\subscript{M}$								 &16 bits							 &Length of ciphertext\\
  \rowcolor{gray!30}
        $d\subscript{K}$								 &16 bits							 &Length of key\\
        $D\subscript{loss}$        & 1                  & Distortion between plaintext and ciphertext\\
        \rowcolor{gray!30} $D\subscript{conf}$        & 10                  & Distortion between plaintext and invalid decoding codeword\\

		$\varepsilon\subscript{Bob,M}\superscript{th}$, $\varepsilon\subscript{Bob,K}\superscript{th}$
        $\varepsilon\subscript{Eve,M}\superscript{th}$, $\varepsilon\subscript{Eve,K}\superscript{th}$		&0.5							 &Thresholds in constraints \eqref{con:err_message}, \eqref{con:err_key}\\
        \rowcolor{gray!30}
        $\Tilde{D}\bob\superscript{th}$  &0.01 &Threshold in constraints \eqref{con:min_Tilde_D_bob} \\
        T  &  100  &  Maximum iterations of simulation\\
		\bottomrule[2px]
	\end{tabular}
\end{table}

\subsection{Semantic Distortion Surface}
\begin{figure}[!htpb]
	\centering
	\includegraphics[width=.8\linewidth, clip, trim=60 0 55 40]{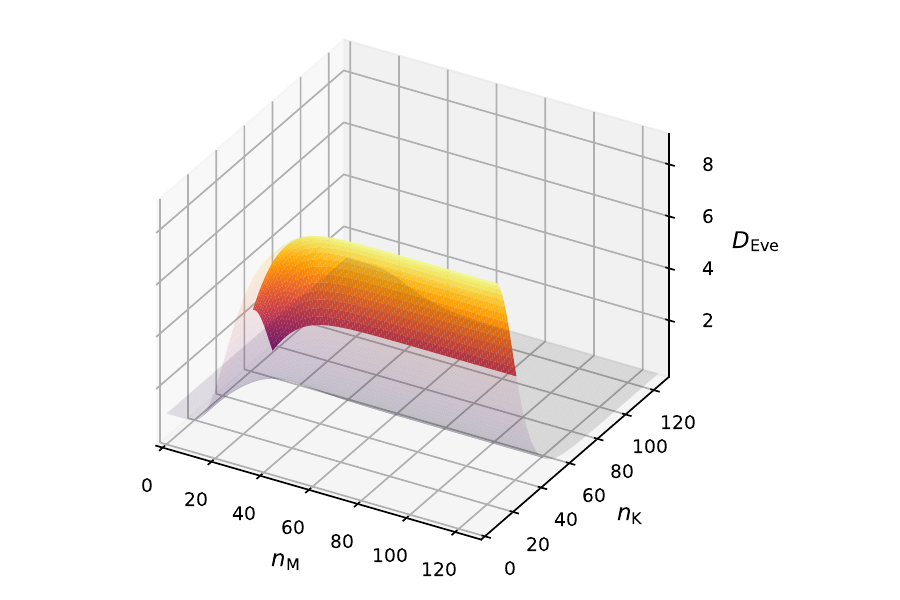}
	\caption{Semantic distortion with $\alpha=0.9$.}
	\label{fig:deception_rate-surface}
\end{figure}
We first verify that the resource allocation subproblem has the concavity structure exploited by the closed-form solution in Sec.~\ref{sec:strategy_optimization}.
Fig.~\ref{fig:deception_rate-surface} plots $\min \Tilde{D}\eve$ over $(n\subscript{M},n\subscript{K}) \in \{1,\ldots,128\}^2$ for the representative Perception--Perception case ($z\bob=\SI{0}{\dB}$, $z\eve=\SI{-10}{\dB}$, $\alpha=0.9$). The feasible region (higher opacity) confirms the concavity and monotonicity predicted by the analysis for $\alpha\geqslant D\subscript{loss}/(\varepsilon\subscript{Eve,K}\superscript{max}D\subscript{conf})$.

\subsection{Validation of the Proposed Algorithm}
\begin{figure}[!htpb]
    \centering
    \includegraphics[width=\linewidth]{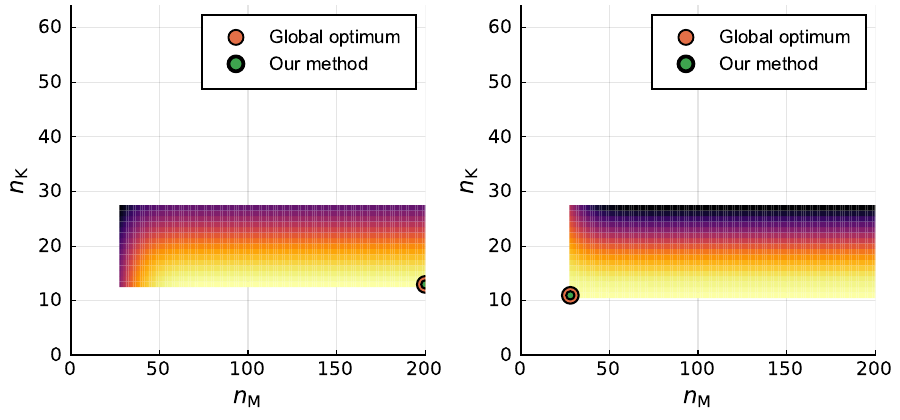}
    \caption{Global optimum in the feasible region with $\alpha=0.9$ (left) and $\alpha=0.1$ (right).}
    \label{fig:Our_method_semantic_distortion_demo}
\end{figure}

To validate that the boundary-based algorithm identifies the global optimum, we compare two ciphering rates on opposite sides of the monotonicity threshold.
Fig.~\ref{fig:Our_method_semantic_distortion_demo} compares $\alpha = 0.9$ (left) and $\alpha = 0.1$ (right) under the same Perception--Perception setting. The monotonicity of $D\eve$ with respect to $n\subscript{M}$ reverses across the threshold $\alpha = D\subscript{loss}/(\varepsilon\subscript{Eve,K}\superscript{max}D\subscript{conf})$, confirming the analysis. In both cases the feasible region is rectangular, so the global optimum lies at a corner and is found in closed form.

\subsection{Optimization of Resource Allocation and Encryption Rate}
We now test whether the iterative alternation between strategy prediction and joint optimization converges. We run Alg.~\ref{alg:optimization} for $T = 100$ iterations with $z\bob = \SI{5}{\dB}$, $z\eve = \SI{0}{\dB}$, initializing at $\alpha = 0.9$ with both receivers in Perception. Both Bob and Eve know Alice’s parameters and select the strategy minimizing their individual distortion.

\begin{figure*}[!htpb]
    \centering
    \begin{subfigure}[b]{.495\linewidth}
        \centering
        \includegraphics[width=.9\linewidth]{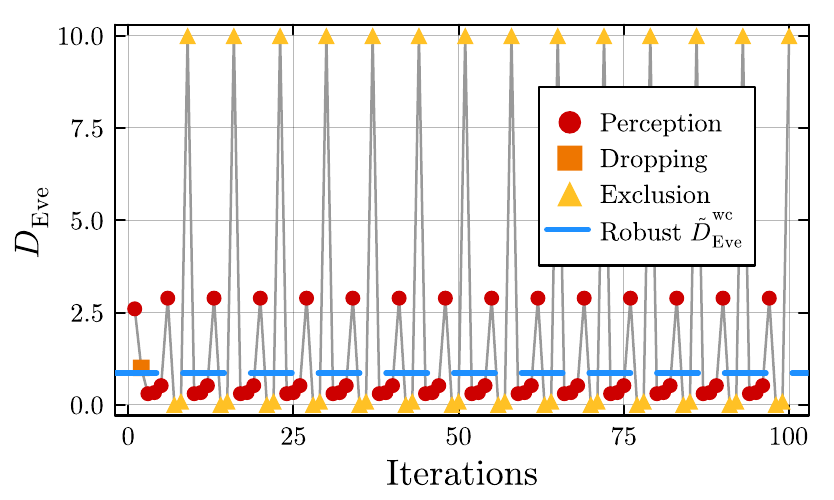}
        \caption{$D\eve$, $\mathcal{S}=2$}
        \label{fig:Deve_S=2}
    \end{subfigure}
    \hfill
    \begin{subfigure}[b]{.495\linewidth}
        \centering
        \includegraphics[width=.9\linewidth]{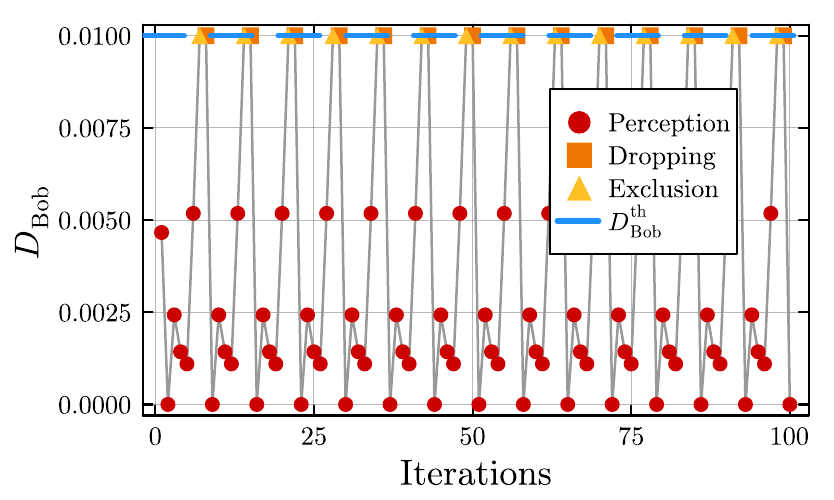}
        \caption{$D\bob$, $\mathcal{S}=2$}
        \label{fig:Dbob_S=2}
    \end{subfigure}
    \\[6pt]
    \begin{subfigure}[b]{.495\linewidth}
        \centering
        \includegraphics[width=.9\linewidth]{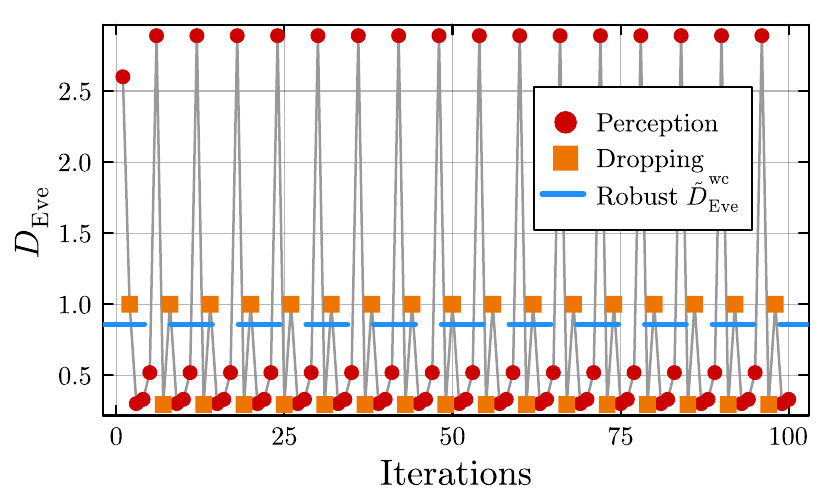}
        \caption{$D\eve$, $\mathcal{S}=2^{16}$}
        \label{fig:Deve_S=2^16}
    \end{subfigure}
    \hfill
    \begin{subfigure}[b]{.495\linewidth}
        \centering
        \includegraphics[width=.9\linewidth]{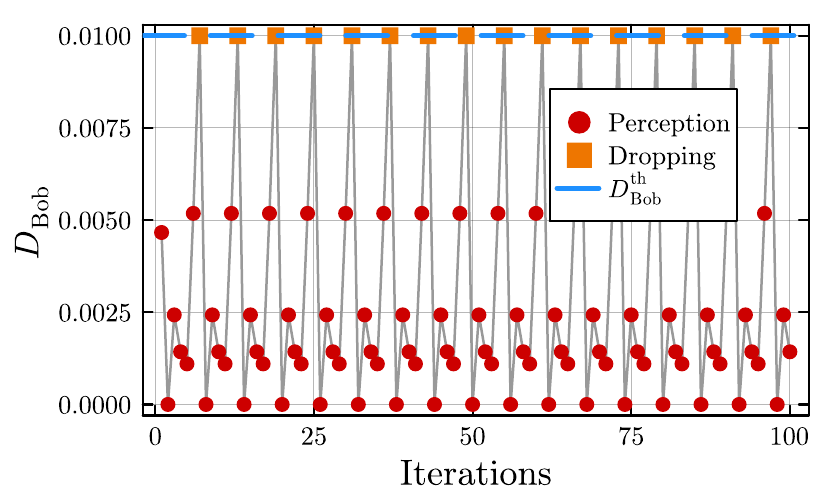}
        \caption{$D\bob$, $\mathcal{S}=2^{16}$}
        \label{fig:Dbob_S=2^16}
    \end{subfigure}
    \caption{Iterative distortion trajectories under dynamic strategy updates compared to the robust operating point.}
    \label{fig:oscillation_traces}
\end{figure*}

Fig.~\ref{fig:oscillation_traces} shows the resulting trajectories. Distortion values oscillate periodically rather than converging, consistent with Theorem~\ref{thm:best_response}. For $\mathcal{S}=2$ (top), Eve favors Exclusion due to the small codebook while Bob predominantly selects Perception. For $\mathcal{S}=2^{16}$ (bottom), neither player selects Exclusion, but the periodic pattern persists. In both cases the robust minimax point (dashed) exceeds the iterative time-average.

\subsection{Robust vs.\ Iterative Comparison}\label{subsec:robust_comparison}

Having observed periodic oscillation in the iterative dynamics, we now systematically compare the robust and iterative approaches across eavesdropper channel conditions.
Fig.~\ref{fig:minimax_comparison} sweeps $z\eve$ for both codebook sizes. The robust approach consistently achieves $\Tilde{D}\eve \approx 1$, while the iterative worst case drops to $\approx 0.5$, roughly half the robust guarantee. Both degrade as Eve's channel improves, but the robust solution maintains its advantage throughout.

\begin{figure*}[!htpb]
    \centering
    \begin{subfigure}[b]{.495\linewidth}
        \centering
        \includegraphics[width=.9\linewidth]{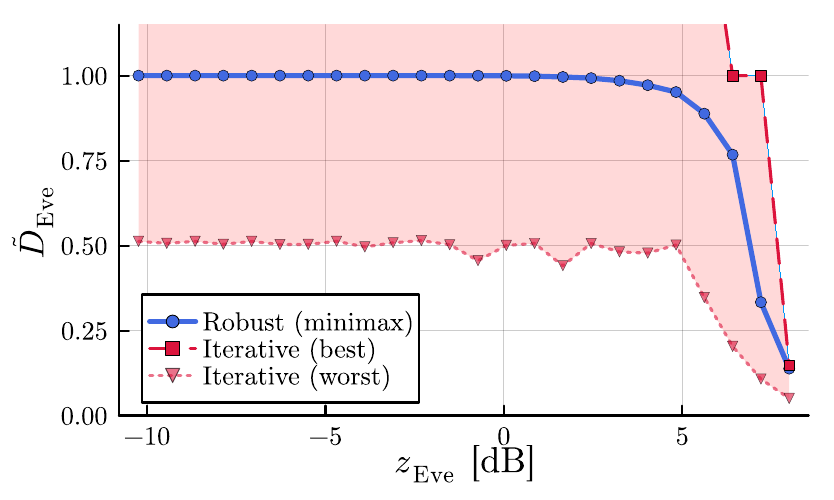}
        \caption{$\mathcal{S} = 2$ (Region~B)}
        \label{fig:comparison_S2}
    \end{subfigure}
    \hfill
    \begin{subfigure}[b]{.495\linewidth}
        \centering
        \includegraphics[width=.9\linewidth]{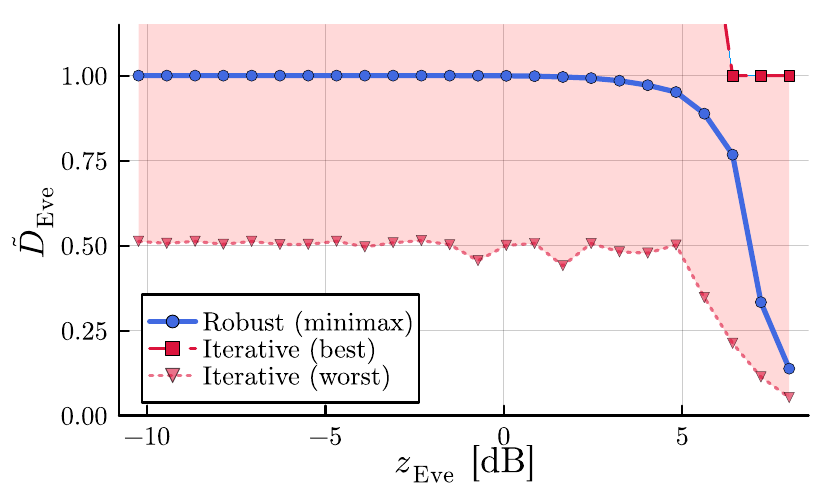}
        \caption{$\mathcal{S} = 2^{16}$ (Region~A)}
        \label{fig:comparison_S65536}
    \end{subfigure}
    \caption{Robust $\alpha^{\dagger}$ (Theorem~\ref{thm:robust_alpha}) vs.\ Alg.~\ref{alg:optimization} ($T=30$, $z\bob = \SI{10}{\dB}$). Shaded band: iterative oscillation range.}
    \label{fig:minimax_comparison}
\end{figure*}

\subsection{Fading Channel Evaluation}\label{subsec:fading}

The preceding analysis assumed \ac{awgn}; we now ask whether the deception advantage survives under fading. The instantaneous \ac{snr} is $\gamma_i = |h_i|^2 P / \sigma^2$ where $|h_i|^2 \sim \mathrm{Gamma}(m, \Omega_i)$, and three fading depths are considered: no fading ($m\to\infty$), shallow ($m=5$, Rician-like), and Rayleigh ($m=1$).

Two transmitter strategies are compared: \emph{Static} (design once from mean \acp{snr}) and \emph{Adaptive} (re-optimize per coherence block using instantaneous $\gamma\bob$ and statistical $\bar{\gamma}\eve$). As a baseline, a classical \ac{pls} scheme sets $\alpha = 0$ so that the only security comes from Eve's packet error rate ($\Tilde{D}_i = \varepsilon_{i,\mathrm{M}} D\subscript{loss}$). Table~\ref{tab:fading_evaluation} reports the results.

\begin{table}[!htpb]
\centering
\caption{Average semantic distortion under Nakagami-$m$ fading ($z\bob = \SI{5}{\dB}$, $z\eve = \SI{0}{\dB}$, $N = 10~000$).}
\label{tab:fading_evaluation}
\begin{tabular}{c c c c c c c}
\toprule[2px]
& & & \multicolumn{2}{c}{\textbf{Static}} & \multicolumn{2}{c}{\textbf{Adaptive}} \\
\cmidrule(lr){4-5} \cmidrule(lr){6-7}
$|\mathcal{S}|$ & Scheme & Fading & $\bar{\Tilde{D}}\eve$ & $\bar{\Tilde{D}}\bob$ & $\bar{\Tilde{D}}\eve$ & $\bar{\Tilde{D}}\bob$ \\
\midrule[1px]
\multirow{6}{*}{$2$} & \multirow{3}{*}{\ac{pld}} & No fading & 0.860 & 0.002 & 0.860 & 0.002 \\
 &  & Shallow & 0.715 & 0.077 & 0.643 & 0.005 \\
 &  & Rayleigh & 0.707 & 0.329 & 0.551 & 0.017 \\
\cmidrule(lr){2-7}
 & \multirow{3}{*}{\ac{pls}} & No fading & 0.556 & 0.008 & 0.556 & 0.008 \\
 &  & Shallow & 0.587 & 0.073 & 0.457 & 0.004 \\
 &  & Rayleigh & 0.648 & 0.298 & 0.485 & 0.007 \\
\midrule
\multirow{6}{*}{$2^{16}$} & \multirow{3}{*}{\ac{pld}} & No fading & 0.860 & 0.002 & 0.860 & 0.002 \\
 &  & Shallow & 0.721 & 0.077 & 0.644 & 0.005 \\
 &  & Rayleigh & 0.707 & 0.329 & 0.575 & 0.034 \\
\cmidrule(lr){2-7}
 & \multirow{3}{*}{\ac{pls}} & No fading & 0.556 & 0.008 & 0.556 & 0.008 \\
 &  & Shallow & 0.593 & 0.073 & 0.457 & 0.004 \\
 &  & Rayleigh & 0.648 & 0.298 & 0.485 & 0.006 \\
\bottomrule[2px]
\end{tabular}
\end{table}

Under no fading, \ac{pld} achieves $\bar{\Tilde{D}}\eve = 0.86$, a $55\%$ gain over \ac{pls} ($0.56$), confirming that the deception mechanism contributes substantially beyond packet erasures alone.

As fading deepens, static designs maintain Eve distortion but Bob's reliability degrades sharply ($\bar{\Tilde{D}}\bob > 0.3$ under Rayleigh) because fixed blocklengths cannot accommodate deep channel dips. The adaptive strategy restores Bob ($\bar{\Tilde{D}}\bob \leq 0.034$) at the cost of reduced Eve distortion: in deep fades Alice must prioritize reliability, leaving fewer degrees of freedom for deception. \Ac{pld} retains its advantage throughout: under adaptive Rayleigh fading, $\bar{\Tilde{D}}\eve \approx 0.55$--$0.57$ versus $\approx 0.49$ for \ac{pls} ($12$--$19\%$ gain).

\section{Conclusion}
\label{sec:conclusion}
We formulated the \ac{pld} design as a Stackelberg game between the transmitter and an adversarial eavesdropper, using semantic distortion as the performance metric. The receiver can adopt one of three decryption strategies (Perception, Dropping, or Exclusion), and we derived closed-form switching surfaces that partition the parameter space into strategy regimes, with an exclusion dominance condition and triple-point analysis characterizing their structure.

The robust operating point $\alpha^{\dagger}$, located at the peak of the worst-case distortion envelope, was shown to be a Stackelberg equilibrium: Alice achieves a provable security guarantee without predicting Eve’s exact strategy. We proved that iterative best-response dynamics oscillate with period two and produce strictly lower time-averaged security than the equilibrium. The closed-form $\alpha^{\dagger}$ thus replaces the iterative algorithm with a simpler and provably better solution.

For the per-strategy resource allocation, we derived closed-form optimal $(n\subscript{M}, n\subscript{K})$ for all feasible (Bob, Eve) strategy pairs. Numerical results confirmed that the regime boundaries match the analytical predictions, that Bob predominantly uses Perception while Eve shifts to Exclusion at small codebook sizes, and that the robust design consistently outperforms the iterative approach across eavesdropper channel conditions.

Under Nakagami-$m$ fading, the static \ac{awgn}-based design preserves eavesdropper distortion but degrades Bob's reliability in deep fading; an adaptive per-block strategy restores reliability at the cost of reduced deception. Across all fading conditions, the \ac{pld} design achieves $12$--$55\%$ higher eavesdropper distortion than a classical erasure-only baseline ($\alpha = 0$). Extending the equilibrium analysis to fading (where strategy boundaries become stochastic) and relaxing the statistical eavesdropper \ac{csi} assumption remain open directions.


%
\bibliographystyle{IEEEtran}
\bibliography{references}


\clearpage

\appendices

\section{Proof of Lemma~\ref{lem:switching_surfaces} (Strategy Switching Surfaces)}
\label{app:proof_switching_surfaces}

Setting $\Delta_1 = \Delta_2$ from \eqref{eq:Tilde_D_i}:
$\varepsilon\subscript{K} \alpha D\subscript{conf} = [\varepsilon\subscript{K} \alpha + (1-\alpha)] D\subscript{loss}$.
Dividing by $D\subscript{conf}$ and rearranging: $\alpha[\varepsilon\subscript{K}(1-\rho) + \rho] = \rho$, yielding \eqref{eq:B12}.
The derivations of \eqref{eq:B13} and \eqref{eq:B23} follow analogously from $\Delta_1 = \Delta_3$ and $\Delta_2 = \Delta_3$, respectively.

\section{Proof of Theorem~\ref{thm:exclusion_dominance} (Exclusion Dominance)}
\label{app:proof_exclusion_dominance}

From \eqref{eq:Tilde_D_i}, the difference $\Delta_2 - \Delta_3$ can be written as:
\begin{equation*}
    \Delta_2 - \Delta_3 = D\subscript{conf}\left[\varepsilon\subscript{K}\alpha\left(\rho - \frac{\mathcal{S}-2}{\mathcal{S}-1}\right) - (1-\alpha)(1-\rho)\right].
\end{equation*}
When $\rho < \frac{\mathcal{S}-2}{\mathcal{S}-1}$, both terms are non-positive for $\alpha \in (0,1)$, so $\Delta_2 < \Delta_3$ universally. The condition $\rho < \frac{\mathcal{S}-2}{\mathcal{S}-1}$ is equivalent to $\mathcal{S} > \frac{2-\rho}{1-\rho}$.

Conversely, when $\mathcal{S} \leq \frac{2-\rho}{1-\rho}$, the switching surface $\mathcal{B}_{23}$ from \eqref{eq:B23} lies in $(0,1)$, implying that Exclusion is optimal for $\alpha > \alpha_{23}^*$.

\section{Proof of Lemma~\ref{lem:triple_point} (Triple Point and Sub-regime Classification)}
\label{app:proof_triple_point}

Setting $\alpha_{12}^* = \alpha_{13}^*$:
\begin{equation*}
    \frac{\rho}{\varepsilon\subscript{K}(1-\rho)+\rho} = \frac{\mathcal{S}-1}{\varepsilon\subscript{K}+\mathcal{S}-1}.
\end{equation*}
Cross-multiplying and simplifying yields $\rho = (\mathcal{S}-1)(1-\rho)$, i.e., $\rho = 1 - 1/\mathcal{S}$.
The same condition ensures $\alpha_{12}^* = \alpha_{23}^*$ and $\alpha_{13}^* = \alpha_{23}^*$, confirming the triple point.
The ordering $\alpha_{12}^* < \alpha_{13}^*$ holds if and only if $\rho < 1 - 1/\mathcal{S}$, as can be verified by direct algebraic comparison of \eqref{eq:B12} and \eqref{eq:B13}.

\section{Proof of Lemma~\ref{lem:impossible_pairs} (Impossible Strategy Pairs)}
\label{app:proof_impossible_pairs}

Each claim follows directly from \eqref{eq:threshold_ordering}. For instance, in~(i): Bob prefers Dropping requires $\alpha > \alpha_{12}^*(\varepsilon\subscript{Bob,K})$, while Eve prefers Perception requires $\alpha < \alpha_{12}^*(\varepsilon\subscript{Eve,K})$. Since $\alpha_{12}^*(\varepsilon\subscript{Eve,K}) < \alpha_{12}^*(\varepsilon\subscript{Bob,K})$, no $\alpha$ satisfies both simultaneously.

\section{Proof of Theorem~\ref{thm:robust_alpha} (Robust Operating Point)}
\label{app:proof_robust_alpha}

Since $\Delta_1 = \varepsilon\subscript{Eve,K} \alpha D\subscript{conf}$ is strictly increasing in $\alpha$, and $\Delta_2$, $\Delta_3$ are both decreasing in $\alpha$, the lower envelope $\min_k \Delta_k(\alpha)$ is a piecewise-linear concave function of $\alpha$. Its maximum is attained at the intersection of $\Delta_1$ with the binding decreasing branch.

\emph{Case 1: $\rho \leq 1 - 1/\mathcal{S}$.} This covers all of Region~A and sub-regime~B1. The active transition is $\Delta_1 \to \Delta_2$ (Perception to Dropping), and the peak of the lower envelope is at $\mathcal{B}_{12}$, yielding $\alpha^{\dagger} = \alpha_{12}^*(\varepsilon\subscript{Eve,K})$. Substituting into $\Delta_1$ gives the corresponding distortion.

\emph{Case 2: $\rho > 1 - 1/\mathcal{S}$.} This is sub-regime~B2, where Dropping is absent. The active transition is $\Delta_1 \to \Delta_3$ (Perception to Exclusion), and the peak is at $\mathcal{B}_{13}$, yielding $\alpha^{\dagger} = \alpha_{13}^*(\varepsilon\subscript{Eve,K})$.

At $\rho = 1 - 1/\mathcal{S}$, both cases give $\alpha^{\dagger} = \frac{\mathcal{S}-1}{\varepsilon\subscript{Eve,K} + \mathcal{S}-1}$, confirming continuity.

\section{Proof of Theorem~\ref{thm:stackelberg} (Stackelberg Equilibrium)}
\label{app:proof_stackelberg}

Part~(i) follows directly from the definition of $\alpha^{\dagger}$: in Case~1 of Theorem~\ref{thm:robust_alpha}, $\alpha^{\dagger} = \alpha_{12}^*$ so $\Delta_1(\alpha^{\dagger}) = \Delta_2(\alpha^{\dagger})$; in Case~2, $\alpha^{\dagger} = \alpha_{13}^*$ so $\Delta_1(\alpha^{\dagger}) = \Delta_3(\alpha^{\dagger})$. Eve's best response is any strategy achieving $\min_k \Delta_k(\alpha^{\dagger})$, which at the switching boundary includes both adjacent strategies.

Part~(ii): The switching threshold $\alpha_{12}^*(\varepsilon\subscript{K}) = \rho / [\varepsilon\subscript{K}(1-\rho) + \rho]$ is strictly decreasing in $\varepsilon\subscript{K}$. Since $\varepsilon\subscript{Bob,K} < \varepsilon\subscript{Eve,K}$ (Bob decodes better than Eve), the threshold ordering \eqref{eq:threshold_ordering} gives $\alpha_{12}^*(\varepsilon\subscript{Eve,K}) < \alpha_{12}^*(\varepsilon\subscript{Bob,K})$. Since $\alpha^{\dagger}$ lies on Eve's switching boundary, $\alpha^{\dagger} = \alpha_{12}^*(\varepsilon\subscript{Eve,K}) < \alpha_{12}^*(\varepsilon\subscript{Bob,K})$, so Bob remains in Perception.

Part~(iii): The lower envelope $\min_k \Delta_k(\alpha)$ is piecewise-linear and concave (Theorem~\ref{thm:robust_alpha} proof). Its unique maximum is at $\alpha^{\dagger}$. Any deviation $\alpha \neq \alpha^{\dagger}$ moves along a descending branch, strictly reducing $\Tilde{D}\eve\superscript{wc}$.

\section{Proof of Theorem~\ref{thm:best_response} (Iterative Best-Response Suboptimality)}
\label{app:proof_best_response}

Part~(i): When Alice assumes Eve adopts Perception, the LP objective is linear and increasing in $\alpha$, so the boundary solution maximizes $\alpha$ subject to Bob's constraint, yielding $\alpha\superscript{o} \gg \alpha_{12}^*(\varepsilon\subscript{Eve,K})$ when Bob's channel is good. Conversely, assuming Eve adopts Dropping or Exclusion, the LP objective is decreasing in $\alpha$, giving $\alpha\superscript{o} = 0$.

Part~(ii): Since $\alpha\superscript{high} > \alpha_{12}^*(\varepsilon\subscript{Eve,K})$, Eve's best response switches away from Perception. Since $\alpha\superscript{low} = 0 < \alpha_{12}^*(\varepsilon\subscript{Eve,K})$, Eve switches back to Perception. The process repeats with period~2.

Part~(iii): By the concavity of $\min_k \Delta_k(\alpha)$, for any $\alpha\superscript{low} < \alpha^{\dagger} < \alpha\superscript{high}$:
\[
    \frac{1}{2}\bigl[\min_k \Delta_k(\alpha\superscript{low}) + \min_k \Delta_k(\alpha\superscript{high})\bigr] < \min_k \Delta_k(\alpha^{\dagger}) = \Delta^{\dagger},
\]
which is Jensen's inequality applied to the concave envelope. The time average inherits this strict inequality.

\section{Proof of Theorem~\ref{thm:D_eve_maximum} (Monotonicity of Eve's Distortion --- Initial Resource Allocation)}
\label{app:proof_D_eve_monotonicity}

We first investigate the monotonicity of $\varepsilon_{i,j}$ with respect to $n_j$. In particular, we have
    \begin{equation}
    \label{eq:app_partial_err_partial_n}
        \frac {\partial {\varepsilon_{i,j}}}{\partial{n_j}}=\frac {\partial {\varepsilon_{i,j}}}{\partial{w_{i,j}}}\frac {\partial {w_{i,j}}}{\partial{n_j}} < 0,
    \end{equation}
    where
    \begin{equation}
        w_{i,j}=\sqrt{\frac{n_j}{V(\gamma_{i})}}\left(\mathcal{C}(\gamma_{i})-\frac{d_j}{n_j}\right)\ln{2},
    \end{equation}
    and
    \begin{equation}
        \frac{\partial \varepsilon_{i,j}}{\partial w_{i,j}}=-\frac{1}{\sqrt{2\pi}}e^{-w_{i,j}^2/2}<0.
    \end{equation}

Since $\mathcal{C}(\gamma_i) > 0$ and $d_j, n_j > 0$, we define $r_j \triangleq d_j / n_j$ and compute
    \begin{equation}
        \frac{\partial w_{i,j}}{\partial n_j} = \frac{\ln 2}{2\sqrt{n_j V(\gamma_i)}} \left(\mathcal{C}(\gamma_i) + r_j\right) > 0.
    \end{equation}
Hence $\varepsilon_{i,j}$ is monotonically decreasing in $n_j$.

Next, we compute the partial derivatives of $D\eve$ with respect to $n\subscript{M}$ and $n\subscript{K}$:
    \begin{align}
    \frac{\partial D\eve}{\partial n\subscript{M}}&=\frac{\partial\varepsilon\subscript{Eve,M}}{\partial n\subscript{M}}(D\subscript{loss}-\alpha\varepsilon\subscript{Eve,K}D\subscript{conf}), \\
    \frac{\partial D\eve}{\partial n\subscript{K}}&=\frac{\partial\varepsilon\subscript{Eve,K}}{\partial n\subscript{K}}\alpha(1-\varepsilon\subscript{Eve,M})D\subscript{conf}< 0.
    \end{align}
    From the second equation, $D\eve$ is monotonically decreasing in $n\subscript{K}$, while the monotonicity of $D\eve$ in $n\subscript{M}$ depends on $(D\subscript{loss}-\alpha\varepsilon\subscript{Eve,K}D\subscript{conf})$. Thus, the optimal $(n\subscript{M},n\subscript{K})$ must be located at the boundary.

    With $n\subscript{K} = n\subscript{K}\superscript{min}$, the upper bound of $\varepsilon\subscript{Bob,K}$ can be expressed as:
    \begin{equation}
        \varepsilon\subscript{Bob,K}\leqslant\varepsilon\subscript{Bob,K}\superscript{max}=\mathcal{Q}\left(\sqrt{\frac{n\subscript{K}\superscript{min}}{V\bob}}\left(\mathcal{C}\bob-\frac{d\subscript{K}}{n\subscript{K}\superscript{min}}\right)\ln 2 \right).
    \end{equation}
    In the above equation, $n\subscript{K}\superscript{min}$ can be derived from the constraint $\varepsilon\subscript{Bob,K}\leqslant \varepsilon\subscript{Bob,K}\superscript{th}$:
    \begin{equation}
        n\subscript{K}\superscript{min}=\phi(\varepsilon\subscript{Bob,K}\superscript{max}).
    \end{equation}

    On the other hand, with $\varepsilon\subscript{Bob,K} = \varepsilon\subscript{Bob,K}\superscript{max}$, the distortion constraint $\Tilde{D}\bob \leq \Tilde{D}\bob\superscript{th}$ gives:
    \begin{equation}
        \varepsilon\subscript{Bob,M} D\subscript{loss} + \alpha(1-\varepsilon\subscript{Bob,M})\varepsilon\subscript{Bob,K}\superscript{max}D\subscript{conf} \leq \Tilde{D}\bob\superscript{th}.
    \end{equation}
    Solving for $\varepsilon\subscript{Bob,M}$:
    \begin{equation}
        \varepsilon\subscript{Bob,M}\leqslant\varepsilon\subscript{Bob,M}\superscript{max}=\frac{\Tilde{D}\bob\superscript{th}-\alpha\varepsilon\subscript{Bob,K}\superscript{max}D\subscript{conf}}{D\subscript{loss}-\alpha\varepsilon\subscript{Bob,K}\superscript{max}D\subscript{conf}}.
    \end{equation}

    With $n\subscript{K}=n\subscript{K}\superscript{min}$ and $\varepsilon\subscript{Eve,K}\superscript{max}$ fixed, the monotonicity of $D\eve$ with respect to $n\subscript{M}$ depends solely on the sign of $D\subscript{loss} - \alpha\varepsilon\subscript{Eve,K}\superscript{max}D\subscript{conf}$:
    \begin{itemize}
        \item When $\alpha \geqslant \frac{D\subscript{loss}}{\varepsilon\subscript{Eve,K}\superscript{max}D\subscript{conf}}$: $D\eve$ is increasing in $n\subscript{M}$, so the maximum is at $(n\subscript{M}\superscript{max}, n\subscript{K}\superscript{min})$.
        \item When $\alpha < \frac{D\subscript{loss}}{\varepsilon\subscript{Eve,K}\superscript{max}D\subscript{conf}}$: $D\eve$ is decreasing in $n\subscript{M}$, so the maximum is at $(n\subscript{M}\superscript{min}, n\subscript{K}\superscript{min})$.
    \end{itemize}

    Combining the two cases, we obtain:
    \begin{equation}
        \max D\eve = \begin{cases}
            D\eve(n\subscript{M}\superscript{max}, n\subscript{K}\superscript{min}), & \alpha \geqslant \frac{D\subscript{loss}}{\varepsilon\subscript{Eve,K}\superscript{max}D\subscript{conf}}, \\
            D\eve(n\subscript{M}\superscript{min}, n\subscript{K}\superscript{min}), & \alpha < \frac{D\subscript{loss}}{\varepsilon\subscript{Eve,K}\superscript{max}D\subscript{conf}}.
        \end{cases}
    \end{equation}

\section{Proof of Theorem~\ref{thm:dropping_monotonicity} (Monotonicity of Eve's Estimated Distortion --- Dropping Strategy)}
\label{app:proof_dropping_monotonicity}

From the error probability monotonicity established in Appendix~\ref{app:proof_D_eve_monotonicity}, we can derive the monotonicity of $\Tilde{D}\eve^{\beta_2}$ with respect to $n\subscript{M}$ and $n\subscript{K}$ respectively:
    \begin{equation}
        \frac{\partial \Tilde{D}\eve^{\beta_2}}{\partial n\subscript{K}}=\frac{\partial\varepsilon\subscript{Eve,K}}{\partial n\subscript{K}}\alpha^{\mathrm{o}}(1-\varepsilon\subscript{Eve,M})D\subscript{loss}<0,
    \end{equation}
    \begin{equation}
        \frac{\partial \Tilde{D}\eve^{\beta_2}}{\partial n\subscript{M}}=\frac{\partial \varepsilon\subscript{Eve,M}}{\partial n\subscript{M}}\alpha^{\mathrm{o}}(1-\varepsilon\subscript{Eve,K})D\subscript{loss}<0.
    \end{equation}

    Thus, $\Tilde{D}\eve^{\beta_2}$ is monotonically decreasing with respect to both $n\subscript{M}$ and $n\subscript{K}$, and the maximum is achieved at $(n\subscript{M}\superscript{min}, n\subscript{K}\superscript{min})$.

\section{Proof of Theorem~\ref{thm:exclusion_monotonicity} (Monotonicity of Eve's Estimated Distortion --- Exclusion Strategy)}
\label{app:proof_exclusion_monotonicity}

We first calculate the first-order derivatives of $\Tilde{D}\eve^{\beta_3}$ with respect to $n\subscript{M}$ and $n\subscript{K}$ respectively:
        \begin{align}
            \begin{split}
                \frac{\partial\Tilde{D}\eve^{\beta_3}}{\partial n\subscript{M}}&=\frac{\partial \varepsilon\subscript{Eve,M}}{\partial n\subscript{M}}\left[D\subscript{loss}-D\subscript{conf}\right. \\
                &\left.+\alpha^{\mathrm{o}} D\subscript{conf}(1-\varepsilon\subscript{Eve,K}\frac{\mathcal{S}-2}{\mathcal{S}-1})\right]
            \end{split}
        \end{align}
        \begin{equation}
            \frac{\partial\Tilde{D}\eve^{\beta_3}}{\partial n\subscript{K}}=\frac{\partial \varepsilon\subscript{Eve,K}}{\partial n\subscript{K}}(1-\varepsilon\subscript{Eve,M})\frac{\alpha\superscript{\mathrm{o}}(\mathcal{S}-2)}{\mathcal{S}-1}D\subscript{conf}<0.
        \end{equation}

    $\Tilde{D}\eve^{\beta_3}$ is monotonically decreasing with respect to $n\subscript{K}$, while the monotonicity with respect to $n\subscript{M}$ depends on $\alpha^{\mathrm{o}}$. When $\alpha^{\mathrm{o}}<\frac{D\subscript{conf}-D\subscript{loss}}{D\subscript{conf}(1-\varepsilon\subscript{Eve,K}\superscript{max}\frac{\mathcal{S}-2}{\mathcal{S}-1})}$, we have $\frac{\partial\Tilde{D}\eve^{\beta_3}}{\partial n\subscript{M}}>0$, so the optimum is at $(n\subscript{M}\superscript{max},n\subscript{K}\superscript{min})$.

    When $\alpha^{\mathrm{o}}\geqslant\frac{D\subscript{conf}-D\subscript{loss}}{D\subscript{conf}(1-\varepsilon\subscript{Eve,K}\superscript{max}\frac{\mathcal{S}-2}{\mathcal{S}-1})}$, we have $\frac{\partial\Tilde{D}\eve^{\beta_3}}{\partial n\subscript{M}}\leqslant 0$, so the optimum is at $(n\subscript{M}\superscript{min},n\subscript{K}\superscript{min})$.

\section{Proof of Proposition~\ref{prop:oscillation} (Oscillation Characterization)}
\label{app:proof_oscillation}

For step~1: when Eve adopts Perception ($\beta_1\superscript{Eve} = 1$) and Bob adopts Perception ($\beta_1\superscript{Bob} = 1$), the ciphering probability optimization gives $\alpha\superscript{o} = \frac{\Tilde{D}\superscript{th}\bob - \varepsilon\subscript{Bob,M}D\subscript{loss}}{(1-\varepsilon\subscript{Bob,M})\varepsilon\subscript{Bob,K}D\subscript{conf}}$. When Bob has a good channel ($\varepsilon\subscript{Bob,K}$ small), this value is large, typically exceeding $\alpha_{12}^*(\varepsilon\subscript{Eve,K})$, which causes Eve to switch strategy.

For step~3: when Eve adopts Dropping or Exclusion and Bob still adopts Perception, the optimization gives $\alpha\superscript{o} = 0$. Since $\alpha_{12}^*(\varepsilon\subscript{Eve,K}) > 0$ for all $\varepsilon\subscript{Eve,K} \in (0,1)$, we have $\alpha\superscript{o} < \alpha_{12}^*(\varepsilon\subscript{Eve,K})$, causing Eve to switch back to Perception.


\end{document}